\newtheorem{definition}{Definition}
\newtheorem{theorem}[definition]{Theorem}
\newtheorem{proposition}[definition]{Proposition}
\newtheorem{lemma}[definition]{Lemma}
\newtheorem{corollary}[definition]{Corollary}
\newtheorem{remark}[definition]{Remark}
\newcommand{\X}[1]{(X_{#1})}
\newcommand{\qsp}[2]{\,\ensuremath{\raise.5ex\hbox{$#1$}\big\slash\raise-.5ex\hbox{$#2$}}} 
\newcommand{\pard}[2]{\frac{\delta#1}{\delta#2}}
\newcommand{\txi}[1]{\widetilde{\xi}^{#1}}
\newcommand{\intl}{\int\limits}
\newcommand{\trintl}[1]{\mathrm{Tr}\intl_{#1}}
\newcommand{\tc}{\widetilde{c}}
\newcommand{\tom}{\widetilde{\omega}}
\newcommand{\te}{\widetilde{e}}
\newcommand{\ttc}{\widetilde{\tc}}
\newcommand{\tte}{\widetilde{\te}}
\newcommand{\ttom}{\widetilde{\tom}}
\newcommand{\ttxi}[1]{\widetilde{\widetilde{\xi^{#1}}}}
\newcommand{\paral}{\slash\!\slash}
\newcommand{\oloc}{\Omega_{\mathrm{loc}}}
\newcommand{\filt}[1]{(#1)}
\newcommand{\filtint}[1]{M^{(#1)}}
\newcommand{\filtBF}[1]{(#1 )}
\newcommand{\filtintBF}[1]{M^{(#1 )}}
\newcommand{\comp}[1]{\langle #1 \rangle}
\newcommand{\BFnd}{BF_{*}}
\title[BV-BFV 3d General Relativity]{Fully extended BV-BFV description of General Relativity in three dimensions}
\author{G. Canepa}
\address{Institut f\"ur Mathematik, Universit\"at Z\"urich, Winterthurerstrasse 190, 8057 Z\"urich, Switzerland}
\email{giovanni.canepa@math.uzh.ch}
\author{M. Schiavina}
\address{Institute for Theoretical Physics, ETH Zurich, Wolfgang Pauli strasse 27, 8093, Z\"urich, Switzerland}
\address{Department of Mathematics, ETH Zurich, R\"amistrasse 101, 8092, Z\"urich, Switzerland }
\email{micschia@phys.ethz.ch}
\thanks{G.C. acknowledges partial support of SNF Grant No. 200020 172498/1. His research was (partly) supported by the NCCR SwissMAP, funded by the Swiss National Science Foundation, and by the COST Action MP1405 QSPACE, supported by COST (European Cooperation in Science and Technology). 
M.S. acknowledges partial support from Swiss National Science Foundation grants P2ZHP2\_164999 and P300P2\_177862.}
\begin{document}

\begin{abstract}
We compute the extension of the BV theory for three-dimensional General Relativity to all higher-codimension strata - boundaries, corners and vertices - in the BV-BFV framework. Moreover, we show that such extension is strongly equivalent to (nondegenerate) BF theory at all codimensions.
\end{abstract}

\maketitle
\tableofcontents

\section*{Introduction}

The BV-BFV formalism is a combination of a Lagrangian approach to field theories with on-shell symmetries --- the BV formalism --- and of its counterpart for constrained Hamiltonian systems --- the BFV formalism --- named after the work of Batalin, Fradkin and Vilkovisky \cite{BV1,BV2,BV3}. 

The link between the two approaches was developed by Cattaneo, Mnev and Reshetikhin in \cite{CMR} as a first step towards quantisation of gauge theories on manifolds with boundaries, with an axiomatisation of Classical and Quantum field theory in mind. The main idea is to allow boundaries to spoil gauge invariance of a given theory, described by its BV-cohomology, but to control such failure by means of induced cohomological data associated to the boundary. This interplay between bulk and boundary data allows for a consistent quantisation scheme \cite{CMR2},  compatible with gluing by construction, whose output is the cohomology of a quantum operator that directly encodes gauge invariance.

The program of approaching General Relativity (GR) from this point of view was initiated in \cite{ScTH}, showing that when diffeomorphism symmetry is involved, the induction procedure essential for the BV-BFV correspondence to hold is far from being guaranteed. Indeed, while the standard example of General Relativity in the Einstein--Hilbert formulation satisfies all the axioms, thus providing a well defined BV-BFV theory in all spacetime dimensions $d\not=2$ \cite{CSEH}, obstructions arise in 4d General Relativity in the (Einstein--Sciama--Kibble--) Palatini--Cartan--Holst tetradic formulation\footnote{The physics literature for this version of GR seems to disagree on standard nomenclature. This is also why we refrain from naming a particular version of the theory in the title of the current paper.} \cite{CSPCH}. Similar obstructions have been found in one-dimensional reparametrisation models \cite{CStime}, or are to be expected in certain supersymmetric models \cite{GP}.

In this paper we show that General Relativity {in spacetime dimension $d=3$}, phrased in the triadic language \cite{W2, Carlip, Cartan, Wise}, admits an extension to all higher codimension strata in the BV-BFV sense (Definition \ref{def:n-extended-BVBFV}){, in stark contrast with the $d\geq4$ analogue, where the procedure fails at codimension 1 \cite{CS2017}}. This means not only that we can control the failure of gauge invariance of the theory upon introducing a boundary, but that the boundary gauge invariance is controlled by corner cohomology, and so on. One says in this case that 3d GR is a fully extended BV-BFV theory. 

We obtain explicit expressions for the BV-BFV data at higher codimension strata, {making 3d GR in triad variables} the first example of a fully extended theory that features a nontrivial symplectic reduction at every step. 
{ We stress that in the Einstein--Hilbert formulation in dimension $d>2$, the extension to corners and higher codimension strata has not been accomplished.}

Out of such data, one can directly read relevant information such as the algebra of constraints together with a cohomological presentation of the reduced phase space (from the codimension-$1$ data), and the representations carried by \emph{boundary insertions} (in codimension-$2$). Moreover, following \cite{MSW2019}, a fully extended BV-BFV theory induces a solution of Witten descent equations \cite{W1} - a step towards the understanding of observables in General Relativity - and one can discuss the emergence of edge modes and holographic counterparts (see, e.g. \cite{CHvD}){, as well as asymptotic symmetries \cite{RS}. Investigations in this direction will be carried out in a future paper.}

Furthermore, { we show that the fully extended BV-BFV description of GR is strongly equivalent to that of \emph{nondegenerate} BF theory, by extending to higher codimensions the notion of strong equivalence of BV theories and the explicit equivalence found in \cite{CSS2017}.} This means that, on every stratum, the data of GR { and that of BF theory differ by a change of coordinates at every codimension.}

Although it is well known that 3d General Relativity is classically equivalent to nondegenerate BF theory \cite{W2}, in this paper we explicitly write down the symmetries in terms of diffeomorphisms, and show that this description is equivalent to the one coming from standard symmetries of BF theory, at all codimensions. This is the result of nontrivial calculations, interesting also as they can be taken as a guideline or bootcamp for the more involved case of 4d gravity.

{One interpretation of this result suggests thinking of BF theory as a (possibly degenerate) extension of GR (for $d=3$), coinciding on an \emph{open sector}, i.e. when the nondegeneracy condition is imposed. From this point of view, BV-BFV quantisation of BF theory (which has been carried out explicitly in \cite{CMR2}) may be taken as a slightly more general quantisation of GR in three dimensions, compatible with cutting and gluing along submanifolds\footnote{The adaptation of a quantisation of BF theory to include corner data has been proposed by \cite{IM2018}.}. The nondegeneracy condition can then be imposed directly at the quantum level, without spoiling the quantisation procedure. Thus, the present paper paves the way for a direct application of these extended quantisation techniques to the example of GR, bringing a model for gravity a step closer to functorial approaches to quantum field theory, by assigning compatible structure to higher-codimension strata, all the way down to points.}

In Section \ref{sec:BVBFVGRproof} we describe in detail the constructive steps one needs in order to obtain the BV-BFV data at every codimension of a stratified manifold $\{M^{\filt{k}}\}_{k=0\dots 3}$. We divide the proof of the main Theorem, stating that GR in the BV formalism is fully extended, into three Propositions, each of which is aimed at recovering data one codimension further.

{In Section \ref{sec:BF-GR_equivalence} we present explicit symplectomorphisms between the spaces of fields $\mathcal{F}^{\filt{k}}_{GR/\BFnd}$ at every codimension, and we show how they commute with the BV-BFV surjective submersion maps. This proves that the BV-BFV induction commutes with equivalence at codimension-$k$.}

The results in this paper show how diffeomorphisms can be seen as an equivalent choice of a BV-extension of classical BF theory, and fully describe the compatibility with higher-codimension strata, completely characterising the symmetries of GR in three dimensions.

\subsection*{Acknowledgements}
We thank Alberto S. Cattaneo for interesting discussions and helpful insight. G.C. is grateful for hospitality to the Department of Mathematics of the University of California at Berkeley.

\section{Preliminaries}
The strategy employed in this paper is to consider the BV-data associated to a manifold $M$ and, step by step, analyse what structure it induces if we allow $M$ to carry a stratification of increasing codimension. We follow here
the adaptation of the classical BV-BFV axioms introduced in \cite{CMR2012}, as proposed by \cite{MSW2019}. 

{
\begin{definition}
Let $M$ be an $m$-dimensional smooth manifold. A $n$-stratification of $M$ is a filtration of smooth manifolds (possibly with boundary) $\{ M ^{\filt{k}}\}_{k=0\dots n}$ of dimension $\mathrm{dim}(M^{\filt{k}})=m-k$, with $\filtint{0}=M$, such that there exists a smooth embedding $\iota^{\filt{k+1}}\colon M^{\filt{k+1}} \to  M^{\filt{k}}$ for every $0\leq k < n$.
\end{definition}
}

\begin{remark}
A particular example of a stratification is given by a manifold with corners (and vertices, i.e. boundaries of corners), where the connected components of boundaries, corners and vertices compose the cells of a stratum $M^{\filt{k}}$. {For practical purposes the reader can consider this example as the main application, although our definitions allow for more general embedded submanifolds.}
\end{remark}

{
\begin{definition}[\cite{MSW2019}]\label{def:n-extended-BVBFV}
A \emph{strict, $n$-extended, exact BV-BFV theory}, shorthanded with \emph{$n$-extended theory}, is the assignment, to the $n$-stratification $\{M^{(k)}\}_{k=0\dots n}$, of the data 
	$$\mathfrak{F}^{\uparrow n}=(\mathcal{F}^{(k)}, S^{(k)}, \alpha^{(k)}, Q^{(k)},\pi^{(k)})_{k=0\dots n},$$
such that, for every $0\leq k \leq n$,
\begin{enumerate}
\item\label{symplecticcondition} $\mathcal{F}^{(k)}$ is the space of sections of a graded vector bundle $E^{(k)}\longrightarrow M^{(k)}$ and $\alpha^{(k)}\in\oloc^{1}(\mathcal{F}^{(k)})$ is a degree-$(k-1)$ local form\footnote{In this paper local forms are differential forms on $\mathcal{F}^{\filt{k}}$ which depend only on a finite number of derivatives of fields, i.e. sections of $E$. See \cite{MSW2019}.}, such that $\varpi^{(k)}=\delta\alpha^{(k)}$ is weakly symplectic on $\mathcal{F}^{(k)}$, and $\delta$ is the de Rham differential on $\mathcal{F}^{\filt{k}}$,
\item $\pi^{(k)}\colon \mathcal{F}^{(k)} \longrightarrow \mathcal{F}^{(k+1)}$ is a surjective submersion\footnote{For $k=n$ the projection $\pi^{\filt{n}}$ is the unique map from $\mathcal{F}^{\filt{n}}$ to the empty set.},
\item $Q^{(k)}$ is an evolutionary, cohomological, odd vector field of degree $1$ on $\mathcal{F}^{(k)}$, i.e. $[\mathcal{L}_{Q^{(k)}},d]=[Q^{(k)},Q^{(k)}]=0$, that is also projectable: $Q^{(k+1)} = (\pi^{(k)})_* Q^{(k)}$,
\item $S^{(k)}\in \Omega_{\mathrm{loc}}^{0}(\mathcal{F}^{(k)})$ is a (real-valued) degree-$k$ local functional,
\end{enumerate}
such that, for $0\leq k\leq n-1$,
\begin{align} \label{rCME}
		  \iota_{Q^{(k)}} \varpi^{(k)} = \delta S^{(k)} + \pi^{(k)*}\alpha^{(k + 1)}
\end{align}
whereas for $k=n$, we require
\begin{equation}\label{terminalCME}
\iota_{Q^{\filt{n}}} \varpi^{\filt{n}} = \delta S^{\filt{n}}. 
\end{equation}
When $n=\mathrm{dim}(M^{(0)})$ we say that the theory is \emph{fully extended}. When $n=0$, the data defines a \emph{BV theory}.
\end{definition}

\begin{remark}
Notice that a direct consequence of \eqref{rCME} is
\begin{equation}\label{b-action}
		\frac12 \iota_{Q^{(k)}}\iota_{Q^{(k)}}\varpi^{(k)} = \pi^{(k)*} S^{(k+1)},
\end{equation}
as well as  $\iota_{Q^{\filt{n}}} \iota_{Q^{\filt{n}}} \varpi^{\filt{n}}= 0$. This is proven in \cite[Proposition 3.1]{CMR2012}.
\end{remark}
}

\begin{remark}
A direct interpretation of Equation \eqref{rCME} is that, on every stratum, $Q^{\filt{k}}$ is the Hamiltonian vector field of the action functional $S^{\filt{k}}$ up to boundary terms \eqref{rCME}, and that the classical master equation is satisfied up to boundary terms \eqref{b-action}. 
\end{remark}

We will be concerned here with classical field theories that enjoy symmetries given by Lie algebra actions, and our starting point to build an extended (exact) BV-BFV theory is a couple $(\mathcal{F}^{cl}, S^{cl})$. The space of classical fields $\mathcal{F}^{cl}$ is the space of sections\footnote{In the present paper we will consider principal connections as fields. However, we can reduce to this setting by expanding around an arbitrary reference connection.} of some sheaf or bundle  $E \rightarrow M$, while $S^{cl}$ is a local functional on $\mathcal{F}^{cl}$, i.e. a function of the fields and a finite number of jets, called \emph{action functional}. The symmetry data is sometimes encoded in an involutive distribution\footnote{In full generality the BV formalism only requires that $\mathcal{D}^{cl}$ be involutive on the critical locus of $S^{cl}$, i.e. the space of solutions of the associated variational problem.} $\mathcal{D}^{cl}$ on $\mathcal{F}^{cl}$. 

The first step is to build a BV theory (or 0-extended BV-BFV) $\mathfrak{F}=\left(\mathcal{F},S,\varpi,Q\right)$, {is to promote the space of classical field to a $(-1)$-symplectic graded manifold $(\mathcal{F},\varpi)$, and to construct a cohomological vector field $Q$ on $\mathcal{F}$, the Hamiltonian vector field of $S$ w.r.t. $\varpi$ \cite{BV1,BV2}.} It is then possible to build an $m$-extended BV-BFV theory using a constructive approach. 

Let  $\{ M ^{\filt{k}}\}_{k=0\dots n}$ be an $n$-stratification of $M$ and consider on it an $n$-extended BV-BFV theory. According to Definition \ref{def:n-extended-BVBFV} on the $n$-th stratum we have equation \eqref{terminalCME}. If we allow an $(n+1)$-codimension stratum, Equation \eqref{terminalCME} will likely be spoiled, or we simply extend the theory by zero. In the former case, if we can find $\pi^{\filt{n}}\colon \mathcal{F}^{\filt{n}}\longrightarrow \mathcal{F}^{\filt{n+1}}$, together with $\alpha^{\filt{n+1}}$ and $S^{\filt{n+1}}$ satisfying \eqref{rCME} (and \eqref{b-action}), and $\varpi^{\filt{n+1}}= \delta \alpha^{\filt{n+1}}$ non degenerate, we will have extended the BV-BFV theory to codimension-$(n+1)$. 

In a practical scenario, this goes through by integrating by parts the terms in $\delta S^{\filt{n}}$, but the resulting data on the higher-codimension stratum does not automatically satisfy the axioms in Definition \ref{def:n-extended-BVBFV}. {In particular, the existence of $\mathcal{F}^{\filt{n+1}}$ as a smooth symplectic manifold (and hence of $\pi^{\filt{n}}$)} is not always guaranteed (see \cite{CSPCH,CStime}). When this happens the theory is then only \emph{$n$-extendable}. We summarise the previous discussion with the following definition:

\begin{definition}\label{def:extendabletheories}
Let $M$ be an $m$-dimensional smooth manifold and let $\mathfrak{F}^{\uparrow 0}$ an exact BV theory on it. We say that the BV-theory $\mathfrak{F}^{\uparrow 0}$ is $n$-extendable if, for every $n$-stratification such that $M^{\filt{0}}=M$, there exists an $n$-extended exact BV-BFV theory $\mathfrak{F}^{\uparrow n}$ associated to it. If $n=\mathrm{dim}(M)$ we will say that $\mathfrak{F}^{\uparrow 0}$ is fully extendable.
\end{definition}

\begin{remark} 
{Typically, (1-extended) BV-BFV theories are associated to a manifold with boundary.} In Definition \ref{def:n-extended-BVBFV} this amounts to taking into account a filtration where $M^{\filt{0}}=M$ is the manifold itself and $M^{\filt{1}}= \partial M$ is the boundary of the manifold. The generalization of this to higher codimensions is to consider $k$-extended BV-BFV theories on manifolds with boundary, corners, vertices, etc. and consider the filtration given by $M^{\filt{k}}$. 
\end{remark}

We give here the notion of a \emph{strong}\footnote{\label{fnequivalence}The natural notion of equivalence, given the cohomological context in which physical data is presented, would coincide with weak equivalences of BV-BFV complexes  {that preserve the BV classes of the symplectic structure and the action functional (see \cite{MSW2019})}. The definition we propose here is essentially that of an isomorphism of complexes, hence a stronger requirement.} equivalence of (extended) BV-BFV theories. It implies the standard notion of equivalence of classical field theories, which requires the critical loci of two action functionals  to be isomorphic (modulo symmetries).\footnote{{Notice that this notion is strictly weaker than the one discussed in footnote \ref{fnequivalence}, since it only requires the cohomology in degree zero to coincide.}}  In the case of BV theories the following definition has been proposed in \cite{CSS2017}:
\begin{definition}\label{def:strongBVeq}
A strong equivalence between two BV theories $\mathfrak{F}_1^{\uparrow 0}$ and $\mathfrak{F}_2^{\uparrow 0}$ is a degree-$0$ symplectomorphism
$$ \Phi : (\mathcal{F}_1^{\filt{0}}, \varpi_1^{\filt{0}} ) \rightarrow (\mathcal{F}_2^{\filt{0}}, \varpi_2^{\filt{0}})$$
preserving the BV action\footnote{We always consider action functionals $S^{\filt{0}}$ modulo constants.}: $\Phi^* S_2^{\filt{0}} = S_1^{\filt{0}}$.
\end{definition}

We can modify this definition to encompass $n$-extended BV-BFV theories.

\begin{definition}\label{def:n-ext-strongBVeq}
A strong equivalence between two $n$-extended exact BV-BFV theories $\mathfrak{F}_1^{\uparrow n}$ and $\mathfrak{F}_2^{\uparrow n}$ is a collection of symplectomorphisms
$$ \Phi^{\filt{k}} : (\mathcal{F}_1^{\filt{k}}, \varpi_1^{\filt{k}} ) \rightarrow (\mathcal{F}_2^{\filt{k}}, \varpi_2^{\filt{k}})$$
preserving the $k^{th}$ BFV action: $\Phi^* S_2^{\filt{k}} = S_1^{\filt{k}}$ and satisfying, for $0\leq k\leq n-1$
$$ \pi_2^{\filt{k}} \circ \Phi^{\filt{k}} = \Phi^{\filt{k+1}} \circ \pi_1^{\filt{k}} .$$
\end{definition}

\section{{Three-dimensional General Relativity and BF theory}}
The common framework shared by 3 dimensional General Relativity and BF theory is as follows. Let $P \rightarrow M$ be an $SO(2,1)$-principal bundle on a 3-dimensional, compact, orientable\footnote{Extensions to noncompact manifolds are possible, but outside the main objective of this paper. {See \cite{RS} for an adaptation of the BV-BFV method to boundaries at infinity.} Orientability is not necessary, but we restrict to orientable manifolds for simplicity.} smooth manifold $M$. Let also $\mathcal{V}$ be the associated vector bundle where each fibre is isomorphic to $(V, \eta)$, a 3-dimensional vector space with a pseudo-Riemannian inner product $\eta$ on it{, with normal form $\eta=\mathrm{diag}(-1,1,1)$ in an $\eta$-orthonormal basis $\{ v_i\}_i$ of $V$}. We further identify $\mathfrak{so}(2,1) \cong \wedge^2 \mathcal{V}$ using $\eta$ and we define a map Tr$: \wedge ^3 V \rightarrow \mathbb{R}$ given by the volume form and such that Tr$(v_i, v_j, v_k)= \epsilon_{ijk}$ { (we fix $\epsilon_{123}=1$)}. To keep the notation light we will use the shorthand 
$$
\int \mathrm{Tr}[\dots] \equiv \mathrm{Tr}\int \dots
$$
In the following subsections we specify the details proper to each theory.
\subsection{Three-dimensional BF theory} \label{sec:BF}
The fields of the theory are $B \in \Omega^1(M, \mathcal{V})$ and a { principal} connection $A \in \mathcal{A}_P$. We will think of $A$ as a connection form around the trivial connection, that is to say $A \in \Omega^1( M, \wedge^2\mathcal{V})$.

\begin{definition}\label{def:classBF}
Classical BF theory is the pair $(\mathcal{F}^{cl}_{BF}, S^{cl}_{BF})$ where $$\mathcal{F}^{cl}_{BF}=  \Omega^1(M, \mathcal{V}) \oplus\Omega^1( M, \wedge^2\mathcal{V}) $$ is the space of fields, and the action functional reads
$$S^{cl}_{BF} = \trintl{M} B\wedge F_A, $$
with $F_A\in \Omega^2( M, \wedge^2\mathcal{V})$ the curvature of the connection $A$. We can further require $B$ to be nondegenerate as a map $B\colon TM \to \mathcal{V}\stackrel{\eta}{\simeq}\wedge^2 \mathcal{V}$. Denoting by $\Omega_{nd}^1( M, \wedge^2\mathcal{V})$ the space of nondegenerate $B$'s, we will call the resulting theory \emph{nondegenerate} BF theory, and denote it with the notation $\BFnd$ where relevant.
\end{definition}
The symmetries of the theory comprise gauge trasformations, parametrized by\footnote{We will denote here the action of a symmetry by the notation $\delta_\chi$ with parameter $\chi$. This is a notation historically used to denote a $\chi$-dependent vector field acting on generators on the algebra of functions over $F^{cl}$. It will be replaced by a well-defined vector field when we pass to the BV formalism.} $\chi \in \Omega^0(M, \mathfrak{so}(2,1))\simeq \Omega^0(M, \wedge^2\mathcal{V})$
$$ \delta_\chi B \equiv [\chi,B] \qquad \qquad \delta_\chi A \equiv d_A \chi,$$
together with what is sometimes referred to as \emph{shift symmetry}, a traslation of $B$ parametrized by $\tau \in \Omega^0(M, \mathcal{V})$, $\delta_\tau B \equiv d_A \tau.$

We recall the BV version of three-dimensional $BF$ theory.
\begin{definition}
The BV-data for BF theory is given by $$\mathfrak{F}_{BF}^{\uparrow 0}=\left(\mathcal{F}_{BF},\alpha_{BF},S_{BF},Q_{BF}\right),$$ where the BV space of fields can be written as  
$$\mathcal{F}_{BF}= T^* [-1]\left(\Omega^1(M, \mathcal{V}) \oplus \mathcal{A}_P \oplus \Omega^0[1]( M, \wedge^2\mathcal{V})\oplus \Omega^0[1]( M, \wedge^1\mathcal{V})\right),
$$
and, if we arrange the fields in the following convenient way 
$$\mathcal{B}=\tau + B + A^\dag + \chi^\dag \in \Omega^\bullet(M, \mathcal{V})[1-\bullet], \qquad \mathcal{A}=\chi + A + B^\dag + \tau^\dag\in \Omega^\bullet(M, \wedge^2\mathcal{V})[1-\bullet],$$
the BV data reads\footnote{We use here the convention that only the admissible terms (i.e. the ones that are top forms) appear in the integrands.}
\begin{align*}
\alpha_{BF} &= \trintl{ M}  \mathcal{B} \wedge \delta \mathcal{A} \qquad \varpi_{BF} =   \delta \alpha_{BF} \\
S_{BF}&= \trintl{ M}\mathcal{B}\wedge \left( d \mathcal{A}+ \frac{1}{2}[\mathcal{A}, \mathcal{A}]\right)\\
Q_{BF}\mathcal{B}&= d_{\mathcal{A}} \mathcal{B} ; \qquad Q_{BF}\mathcal{A}=d \mathcal{A}+ \frac{1}{2}[\mathcal{A}, \mathcal{A}].
\end{align*}
If $B\in \Omega^1_{nd}(M, \wedge^2\mathcal{V})$ we will denote the resulting BV theory by $\mathfrak{F}^{\uparrow 0}_{\BFnd}$.
\end{definition}

{ BF theory is an example of an AKSZ theory \cite{AKSZ}. As such, it can be fully extended:}

\begin{theorem}\label{thm:BF-BV_BFV}\cite{CMR2012}
The BV theory $\mathfrak{F}^{\uparrow 0}_{BF}=(\mathcal{F}_{BF},S_{BF}, \alpha_{BF}, Q_{BF} )$ is fully extendable. The BV-BFV data of the fully extended theory $\mathfrak{F}^{\uparrow 3}_{BF}$ is given by the following expressions ($i=0 \dots 3$):
\begin{align*}
\alpha^{\filtBF{i}}_{BF} =& \trintl{\filtintBF{i}}  \mathcal{B} \wedge \delta \mathcal{A}\qquad \varpi^{\filtBF{i}}_{BF} = \delta \alpha^{\filtBF{i}}_{BF}\\
S^{\filtBF{i}}_{BF}=& \trintl{\filtintBF{i}}\mathcal{B}\wedge \left( d \mathcal{A}+ \frac{1}{2}[\mathcal{A}, \mathcal{A}]\right)\\
Q^{\filtBF{i}}_{BF}\mathcal{B}&= d_{\mathcal{A}} \mathcal{B} ; \qquad Q^{\filtBF{i}}_{BF}\mathcal{A}=d \mathcal{A}+ \frac{1}{2}[\mathcal{A}, \mathcal{A}]
\end{align*}
where we used once again the convention that only the admissible terms appear in the integrands and $\pi^{\filtBF{i}}_{BF}$ is the restriction of the fields to $\filtintBF{i+1}$.
\end{theorem}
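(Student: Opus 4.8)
The plan is to exploit the remark recalled just above the statement, namely that three-dimensional $BF$ theory is an AKSZ sigma model \cite{AKSZ}, for which full extendability is a structural feature rather than an accident. First I would make the AKSZ presentation explicit. The source is the graded manifold $T[1]M$, so that maps into a target encode inhomogeneous differential forms, and the target is the degree-$2$ symplectic differential graded manifold $\mathcal{M}=T^*[2]\big(\bigwedge^2\mathcal{V}[1]\big)$, whose degree-$1$ base and fibre coordinates assemble, after transgression, into the superfields $\mathcal{A}$ and $\mathcal{B}$. Writing $\omega$ for the target symplectic form (of degree $2$), $\vartheta$ for its Liouville primitive, and $\Theta$ for the cubic Hamiltonian of degree $3$ built from the bracket of $\bigwedge^2\mathcal{V}\cong\mathfrak{so}(2,1)$, the condition $\{\Theta,\Theta\}=0$ that makes $(\mathcal{M},\omega,\Theta)$ a QP-manifold is exactly the Jacobi identity together with $\eta$-invariance. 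The full AKSZ action then combines the transgression of $\Theta$ with the topological kinetic term produced by the de Rham differential on the source.

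Second, I would define the candidate higher-codimension data by transgression over the strata: $\alpha^{(i)}=\trintl{\filtintBF{i}}\mathcal{B}\wedge\delta\mathcal{A}$, $\varpi^{(i)}=\delta\alpha^{(i)}$, and $S^{(i)}=\trintl{\filtintBF{i}}\mathcal{B}\wedge(d\mathcal{A}+\tfrac12[\mathcal{A},\mathcal{A}])$, with $\pi^{(i)}$ the restriction of the superfields from $M^{(i)}$ to $M^{(i+1)}$. The convention that only top-degree components survive the integration over each stratum automatically projects onto the fields of the correct total degree, so that $\varpi^{(i)}$ has the degree required by Definition \ref{def:n-extended-BVBFV}: transgression over the $(3-i)$-dimensional stratum $M^{(i)}$ lowers the target degree $2$ by $3-i$, producing degree $i-1$.

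Third, I would verify Equations \eqref{BVBFVeqts} on the strata $i=0,1,2$ and Equation \eqref{terminalCME} on $i=3$. Computing $\delta S^{(i)}$ and integrating by parts the term $\mathcal{B}\wedge d(\delta\mathcal{A})$ via Stokes splits it into a bulk piece, which reproduces $\iota_{Q^{(i)}}\varpi^{(i)}$ with $Q^{(i)}$ acting as $Q^{(i)}\mathcal{B}=d_{\mathcal{A}}\mathcal{B}$ and $Q^{(i)}\mathcal{A}=d\mathcal{A}+\tfrac12[\mathcal{A},\mathcal{A}]$, and a boundary piece $\trintl{\filtintBF{i+1}}\mathcal{B}\wedge\delta\mathcal{A}=\pi^{(i)*}\alpha^{(i+1)}$; this is \eqref{rCME}. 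For \eqref{b-action} I would compute $\iota_{Q^{(i)}}\iota_{Q^{(i)}}\varpi^{(i)}$ and show, using $\{\Theta,\Theta\}=0$ and Stokes once more, that the would-be failure of the classical master equation localises on the boundary to $2\,\pi^{(i)*}S^{(i+1)}$. On the top stratum $i=3$ there is no further boundary, so both right-hand sides vanish and one recovers \eqref{terminalCME}.

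The step demanding the most attention, rather than a genuine obstacle, is the nondegeneracy of each $\varpi^{(i)}$: Definition \ref{def:n-extended-BVBFV} requires an honest symplectic form at every codimension, with no symplectic reduction allowed. For an AKSZ theory this is automatic, because the space of fields on $M^{(i)}$ is the full mapping space into the nondegenerate target $\mathcal{M}$ and its two-form is the transgression of the nondegenerate $\omega$, so no quotient is needed. This is precisely the feature that sets AKSZ theories apart, and that the rest of the paper contrasts with General Relativity, where the analogous induced forms are degenerate and reduction is forced. The residual work is then purely the bookkeeping of signs and of the admissible-terms convention in the Stokes integrations, which is routine once the QP structure of the target is in place.
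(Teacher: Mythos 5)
Your proposal is correct and takes essentially the same route as the paper, which does not prove this theorem itself but recalls it from \cite{CMR2012} with the remark that, as an AKSZ theory, BF theory is automatically fully extendable --- precisely the structure (target $T^*[2](\bigwedge^2\mathcal{V}[1])$, cubic Hamiltonian with $\{\Theta,\Theta\}=0$, transgression and Stokes on each stratum) that you make explicit. The only small imprecision is the claim that ``no quotient is needed'': the pre-stratum spaces do contain transversal jets, which span the kernel of the pre-stratum two-form, but quotienting them out is the trivial projection realised by restriction of fields, so the transgressed form on the mapping space over each stratum is already (weakly) nondegenerate, which is exactly the feature distinguishing BF from GR here.
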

Note that in this notation $BF$ theory is \textit{self-similar}, i.e. the action $S^{\filtBF{i}}_{BF}$, the symplectic two form $\varpi^{\filtBF{i}}_{BF}$ and the cohomological vector field $Q^{\filtBF{i}}_{BF}$ have the same expression on bulk ($0$-stratum), boundary ($1$-stratum) and every subsequent iteration.

\subsection{Three dimensional General Relativity}\label{sec:GR}
The fields are a co-frame field $ e \in \Omega_{nd}^1(M, \mathcal{V})$, also called a \emph{triad} ($nd$ stands for non degenerate), i.e. an isomorphism $e\colon TM \rightarrow \mathcal{V}$, and an $SO(2,1)$ principal connection $\omega \in \mathcal{A}_P\simeq \Omega^1( M, \wedge^2\mathcal{V})$ (again we work around the trivial connection).

\begin{definition}\label{def:classical_GR}
Classical three dimensional General Relativity (GR) is the pair $(\mathcal{F}^{cl}_{GR}, S^{cl}_{GR})$ where $$\mathcal{F}^{cl}_{GR}=  \Omega_{nd}^1(M, \mathcal{V}) \oplus\Omega^1( M, \wedge^2\mathcal{V}) $$ is the space of fields and the action functional reads
$$S^{cl}_{GR} = \trintl{M} e\wedge F_\omega .$$
\end{definition}

In order to define a BV theory extending Definition \ref{def:classical_GR}, we have to incorporate the symmetries by extending the space of fields. The classical functional $S_{GR}^{cl}$ is invariant under the action of internal gauge transformations $SO(2,1)$ and the action of spacetime diffeomorphisms. {We parametrize their associated Lie-algebra actions with two \emph{ghost} fields, $c \in \Omega^0[1](M, \wedge^2 \mathcal{V})$ and $\xi \in \Gamma [1](TM)$ respectively}:
\begin{align*}
\delta_\xi e \equiv L_\xi^\omega e \qquad \qquad \delta_\xi\omega \equiv \iota_\xi F_\omega\\
\delta_c e \equiv [c,e] \qquad \qquad \delta_c \omega \equiv d_\omega c
\end{align*}
where $L_\xi^\omega\coloneqq [\iota_\xi,d_\omega]$ is the graded commutator between the contraction with respect to $\xi$ (a degree-$0$ derivation), and $d_\omega$ is the covariant derivative (a degree-$1$ derivation).
With these quantities we can also define $\iota_{[\xi,\xi]} := [L_\xi^\omega, \iota_\xi]$. Note that by  \cite[Lemma 18]{CS2017} $\iota_{[\xi,\xi]} = [L_\xi^\omega, \iota_\xi]= [L_\xi, \iota_\xi]$.

The BV structure associated to these symmetries has been studied in generality in \cite[Section 3]{CS2017}. 

\begin{definition}\label{def:BVGR}
The BV theory for General Relativity in three dimensions is given by the data $\mathfrak{F}^{\uparrow 0}_{GR}=(\mathcal{F}_{GR},S_{GR}, \alpha_{GR}, Q_{GR} )$ where the BV space of fields is 
 $$\mathcal{F}_{GR}= T^* [-1]\left(\Omega_{nd}^1(M, \mathcal{V}) \oplus \mathcal{A}_P \oplus \Omega^0[1]( M, \wedge^2\mathcal{V})\oplus \Gamma[1]TM\right)$$
denoting the fields in the cotangent fibre by $e^\dag=\Omega^2[-1](M,\wedge^2 \mathcal{V})$, $\omega^\dag\in\Omega^2[-1](M,\mathcal{V})$, $c^\dag= \Omega^{\textrm{top}}[-2](M,\mathcal{V})$ and {$\xi^\dag\in\Omega^{\text{top}}[-2](M,T^*M)$}, and symmetry generators as $\xi\in\Gamma[1]TM$ and $c\in\Omega^0[1](M,\wedge^2\mathcal{V})$, the BV one-form and action functional are\footnote{See Remark \ref{Antighostxi} for the meaning of the trace on the last term.}
\begin{equation*}
{\alpha_{GR}=  \trintl{ M} e^\dag \delta e + \omega^\dag \delta \omega +c^\dag \delta c + \iota_{\delta\xi}  \xi^\dag,}
\end{equation*}
\begin{align}\label{GR-bulk-action}
S_{GR}=\mathrm{Tr}\intl_{M} & eF_\omega + e^\dag \left( L_\xi^\omega e - [c,e]\right) + \omega^\dag \left(\iota_\xi F_\omega - d_\omega c\right)\nonumber\\ &+\frac12 c^\dag\left(\iota_\xi\iota_\xi F_\omega  - [c,c]\right) + \frac12\iota_{[\xi,\xi]}\xi^\dag, 
\end{align}
and the vector field $Q_{GR}$, satisfying $\iota_{Q_{GR}} \varpi_{GR} = \delta S_{GR}$ when $M$ is closed and without boundary, is given by:
\begin{subequations}\label{Q_GR-bulk1}
\begin{eqnarray}
 &Qe= L_\xi^\omega e - [c,e] & Q\omega= \iota_\xi F_\omega - d_\omega c\\
 &Qc= \frac12\left(\iota_\xi\iota_\xi F_\omega  - [c,c]\right) & Q\xi= \frac12 [\xi,\xi]\\
 &Qe^\dag= F_{\omega} + L_\xi^\omega e^\dag - [c, e^\dag] &Qc^\dag= -[e^\dag,e] - d_\omega \omega^\dag +[c^\dag, c]
 \end{eqnarray}
\begin{align}
&Q\omega^\dag = d_{\omega} e - \iota_\xi[ e^\dag, e]- d_\omega (\iota_\xi \omega^\dag)-[c, \omega^\dag] + \frac12 d_\omega (\iota_\xi \iota_\xi c^\dag)\\
&Q \xi^\dag_\bullet = - e ^\dag_{\bullet} d_\omega e-  d_\omega e^\dag e_\bullet - \omega^\dag_\bullet F_\omega + \iota_\xi c^\dag_\bullet F_\omega+ \partial_\bullet \xi^a \xi^\dagger_a+ \partial_a \xi^a \xi^\dag_\bullet, 
\end{align}
\end{subequations}
where we dropped the GR-subscript and denoted the 1-form coefficient of elements in {$\Omega^{\text{top}}[-2](M,T^*M)$} with a bullet.
\end{definition}
{
\begin{remark}\label{Antighostxi}
The \emph{antighost} field {$\xi^\dag\in\Omega^{\text{top}}[-2](M,T^*M)$} is an element in the fiber of $T^*[-1](\Gamma[1]TM)$. In order to treat it homogeneously with respect to all other fields we can equivalently view it as {$\xi^\dag=\chi\otimes \mathsf{V} \in \Omega^1(M)\otimes \Omega^{\text{top}}(M, \wedge^3 \mathcal{V})$ multiplying it by a fixed normalised volume form $\mathsf{V}\in\Omega^{\text{top}}(M, \wedge^3 \mathcal{V})$}. Its trace will recover the original {$\xi^\dag$,} and it will be particularly useful to simplify the expressions appearing in Proposition \ref{prop:BFVdata}.
\end{remark}
}

We can now state the main results in this paper. {They refer to Definitions \ref{def:extendabletheories} and \ref{def:n-ext-strongBVeq} respectively}. Sections \ref{sec:BVBFVGRproof} and \ref{sec:BF-GR_equivalence} will be devoted to their proof. 
\begin{theorem}\label{thm:BVBFVGR}
The BV theory $\mathfrak{F}^{\uparrow 0}_{GR}=(\mathcal{F}_{GR},S_{GR}, \alpha_{GR}, Q_{GR} )$ is fully extendable.
\end{theorem}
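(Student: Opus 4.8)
The plan is to proceed inductively along the stratification, constructing the BV-BFV data one codimension at a time, starting from the BV data $\mathfrak{F}^{\uparrow 0}_{GR}$ of Definition \ref{def:BVGR} and climbing up to the vertices $\filtint{3}$. The general constructive step is the one sketched in the Preliminaries: assuming we have produced $(\mathcal{F}^{(k)}, \varpi^{(k)}, S^{(k)}, Q^{(k)})$, I compute the variation $\delta S^{(k)}$ on a stratum carrying $\filtint{k+1}$ as its boundary. Because $S^{(k)}$ is local, $\delta S^{(k)}$ fails to equal $\iota_{Q^{(k)}}\varpi^{(k)}$ precisely by a $d$-exact term whose integral localises on $\filtint{k+1}$; this boundary integral is, by definition, the pre-boundary one-form $\widetilde\alpha^{(k+1)}$ living on the space $\widetilde{\mathcal{F}}^{(k+1)}$ of restrictions of the fields (and their relevant jets) to $\filtint{k+1}$. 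Setting $\widetilde\varpi^{(k+1)}=\delta\widetilde\alpha^{(k+1)}$, the analogue of \eqref{rCME} holds tautologically on $\widetilde{\mathcal{F}}^{(k+1)}$; the content of the theorem is that this generically degenerate $\widetilde\varpi^{(k+1)}$ can be reduced to a genuine symplectic form.

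For the first step ($k=0$) I would integrate by parts each summand of the action \eqref{GR-bulk-action}. The Einstein--Hilbert term contributes the leading piece $\trintl{\filtint{1}} e\,\delta\omega$, since $\delta F_\omega=d_\omega\delta\omega$ and $d\,\mathrm{Tr}[e\,\delta\omega]=\mathrm{Tr}[d_\omega e\,\delta\omega]\pm\mathrm{Tr}[e\,d_\omega\delta\omega]$, while the ghost and antifield terms contribute the remaining boundary monomials built from $e^\dag,\omega^\dag,c,\xi$ restricted to $\filtint{1}$. Collecting these yields $\widetilde\alpha^{(1)}$, and I would then compute the kernel of $\widetilde\varpi^{(1)}=\delta\widetilde\alpha^{(1)}$. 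This is the crucial point: the kernel is cut out by pointwise (algebraic) conditions on the boundary fields, and computing it amounts to controlling the injectivity and surjectivity of the wedge maps $X\mapsto e\wedge X$ on $\wedge^\bullet\mathcal{V}$-valued forms along $\filtint{1}$. In $d=3$ the nondegeneracy of the triad---the fact that $e\colon TM\to\mathcal{V}$ is an isomorphism---forces these maps to have constant rank, so the kernel is a regular distribution and the leaf space $\mathcal{F}^{(1)}$ is a smooth symplectic manifold with $\varpi^{(1)}$ nondegenerate. The surjective submersion $\pi^{(0)}$ is then restriction to $\filtint{1}$ followed by the quotient map, and $S^{(1)}$ is read off from \eqref{b-action} by computing $\iota_{Q_{GR}}\iota_{Q_{GR}}\varpi^{(1)}$ and checking that the resulting functional is basic, hence descends to $\mathcal{F}^{(1)}$.

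I would then iterate the same procedure twice more to reach $\filtint{2}$ (corners) and $\filtint{3}$ (vertices), in each case reading the higher one-form $\widetilde\alpha^{(k+1)}$ off the boundary term of $\delta S^{(k)}$, reducing the degenerate $\widetilde\varpi^{(k+1)}$, and fixing $S^{(k+1)}$ through \eqref{b-action}; at the top stratum $k=3$ one instead verifies the terminal conditions \eqref{terminalCME}. Packaging these into three successive propositions gives the full extension required by Definition \ref{def:n-extended-BVBFV}. The main obstacle---and the entire substance of the theorem---is the reduction step: establishing that at \emph{every} codimension the kernel of $\widetilde\varpi^{(k+1)}$ has constant rank and integrates to a regular foliation, so that $\mathcal{F}^{(k+1)}$ exists as a symplectic manifold and the candidate $S^{(k+1)}$ is invariant along the fibres of $\pi^{(k)}$. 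This is exactly the step obstructed in the four-dimensional Palatini--Cartan theory \cite{CSPCH}; what rescues it here is that in three dimensions the triad is a genuine isomorphism, so the relevant algebraic maps never drop rank and a nontrivial but regular reduction can be carried out at each of the three steps.
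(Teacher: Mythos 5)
Your overall strategy coincides with the paper's: extend codimension by codimension, reading a pre-boundary one-form off the boundary term of $\delta S^{(k)}$, reducing the degenerate pre-symplectic form, and fixing $S^{(k+1)}$ through \eqref{b-action}. But as a proof there is a genuine gap, and it sits exactly where you yourself locate ``the entire substance of the theorem''. You justify the constant-rank property of $\ker\check{\varpi}^{(k+1)}$ by the nondegeneracy of the triad, i.e.\ that $e\colon TM\to\mathcal{V}$ is an isomorphism. That reason cannot suffice: in the four-dimensional Palatini--Cartan theory the tetrad is equally an isomorphism, and yet the reduction is obstructed \cite{CSPCH}, so nondegeneracy alone does not ``force these maps to have constant rank''. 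Moreover, on a stratum $\filtint{k}$ with $k\geq 1$ the restricted coframe is no longer an isomorphism at all, since $\dim\filtint{k}<\dim\mathcal{V}$: what the paper actually does is complete its image with fixed transverse sections ($\epsilon_n$, then $\epsilon_m$, then $\epsilon_a$), expand the kernel equations in the resulting basis, and solve them explicitly (Equations \eqref{kernelsettom1}--\eqref{kernelset2} and their analogues at codimensions $2$ and $3$). The mechanism that makes the rank constant is that every kernel component comes out proportional to the odd transverse ghost components ($\xi^n$, $\txi{m}$, $\ttxi{a}$), so the remaining quadratic constraints are satisfied identically; and the reduction itself requires care precisely because of nondegeneracy: one cannot flow $e_n$ to zero, only to $\epsilon_n=(1+\varepsilon)e_n$, which is why $\mathcal{F}^{(k)}_{GR}$ is a bundle over $\Omega^1_{nd}(\filtint{k},\mathcal{V})$ equipped with a chosen completion, rather than a plain symplectic quotient. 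These dimension-specific computations are what distinguishes $d=3$ from $d=4$, and none of them follow from your wedge-map heuristic.

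A second, smaller inaccuracy: you claim the analogue of \eqref{rCME} holds ``tautologically'' on the pre-boundary space and that the one-form then descends. In fact the pre-boundary one-form $\check{\alpha}$ produced by integration by parts is \emph{not} basic with respect to the reduction (at codimension $1$, $\iota_{\mathbb{E}_n}\check{\alpha}\neq 0$); it descends only after adding the exact correction $\delta(e\, e^\dag_n\xi^n)$ (and $\delta(\iota_{\txi{}}\te\,\te^\dag_m\txi{m}+\epsilon_n\txi{n}\te^\dag_m\txi{m})$ at codimension $2$), equivalently after modifying $S^{(k)}$ by a term concentrated on the higher stratum. Checking basicness (or arranging it by such corrections) is a necessary step for the exactness axioms of Definition \ref{def:n-extended-BVBFV}, not a formality, and your outline omits it.
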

\begin{theorem} \label{thm:extended-equivalence}
The fully extended BV-BFV theories $\mathfrak{F}^{\uparrow 3}_{GR}$ and $\mathfrak{F}^{\uparrow 3}_{\BFnd}$ are strongly equivalent.
\end{theorem}

\begin{remark}
The space of BV fields for classical GR and the action of symmetries presented in Definition \ref{def:BVGR} is essentially independent of spacetime dimensions (although the action functional is not). However, it was shown in \cite{CSPCH} that the 4-dimensional BV theory of General Relativity (in the tetrad formalism) cannot be extended without extra assumptions on the fields. Theorem \ref{thm:BVBFVGR}, compared with the no-go result in \cite{CSPCH} marks a stark difference between 3 and 4 spacetime dimensions. GR in the Einstein--Hilbert formalism, instead, is independent of spacetime dimensions and is always at least 1-extendable {(for $d\not=2$, and whenever the boundary is either space-like or time-like)} \cite{CSEH}.
\end{remark}

\section{Fully extended BV-BFV structure of GR}\label{sec:BVBFVGRproof}
In this section we will prove Theorem \ref{thm:BVBFVGR} by extending the BV theory of definition \ref{def:BVGR} step by step, thus building the fully extended BV-BFV data for GR at every codimension.

First, let us introduce some useful notation and explain the common strategy employed at every step. Throughout the section, as we will only consider GR theory, we will drop the GR subscript everywhere, except in stating the results.

\subsection{Notation and strategy}\label{sec:Notation}
Consider  $X\in \Omega^\bullet(M, \mathcal{V})$ and $Y\in \Omega^\bullet(M, \wedge^2 \mathcal{V})$. Since the image of a nondegenerate triad $e$ is a basis of $V$ at every point, we can express $X$ and $Y$ as 
$$ X = \sum_{i=1}^3 X^{\comp{i}} e_i \qquad \qquad Y =  \sum_{i,j=1, i\neq j}^3 Y^{\comp{ij}} e_i \wedge e_j $$
where $e_i = e (\partial_i)$, and we denote by $X^{\comp{i}}$ the $i$-th component of $X$ and by $Y^{\comp{ij}}$ the $ij$-th component of $Y$. We can then define the following projections, $i=1,2,3$:
\begin{align*}
p_i:   \Omega^\bullet\left(M, \mathcal{V}\right) &  \rightarrow \Omega^\bullet\left(M,  \mathcal{V}\right)\\ X &\mapsto p_i(X)=X^{\comp{i}}e_i.
\end{align*}
It is also useful to define a \textit{dual} map acting on elements of $\Omega^1(M, \wedge^2 \mathcal{V}):$
\begin{align}\label{projection_p_i^dag}
p^\dag_i:   \Omega^\bullet\left(M, {\wedge}^2 \mathcal{V}\right)  &\rightarrow \Omega^\bullet\left(M, {\wedge}^2 \mathcal{V}\right)\\
Y & \mapsto p_i^\dag (Y) =Y^{\comp{hk}} e_h \wedge e_k, \qquad h,k\neq i. \nonumber
\end{align}

{ Furthermore, given any element $Z \in \Omega^\bullet(M, \wedge^3 \mathcal{V})$ we can expand it with respect to the basis spanned by the image of $e$: $Z= Z^{\comp{abc}}e_a\wedge e_b \wedge e_c$. We then define 
\begin{equation}\label{componenttopform}
Z^{\comp{a}} \colon = Z^{\comp{abc}} e_b \wedge e_c \qquad Z^{\comp{a}}\in \Omega^k(M, \wedge^2 \mathcal{V}).
\end{equation}

Let $M$ be a $d$-dimensional smooth manifold. Given a 1-stratification $\{ \filtint{k}\}_{k=0,1}$ of $M$, consider a tubular neighbourhood $U \subset \filtint{0}$ of the embedding $\iota^{\filt{1}}\colon \filtint{1} \to \filtint{0}$ and a local chart on $U$ with coordinates $(x^1, \dots , x^{d-1}, x^n)$ where $x^n$ is the coordinate along the normal direction of the tubular neighbourhood and $ x^1, \dots , x^{d-1}$ are coordinates of $\filtint{1}$. Throughout the article we keep this notation, thus denoting with $n$ the coordinate normal to the first stratum.  We can expand forms $\mu \in \Omega^k(M)$ with respect to this coordinate system and get
\begin{align} \label{restriction}
\mu &=:  \mu^{\paral} + \mu_n dx^n,\\ \notag
\mu^{\paral} &= \sum_{\substack{i_j=1\\ j=1\dots k}}^{d-1} \mu_{i_1\dots i_k} dx^1 \wedge \dots \wedge dx^k\\\notag
\mu_n&=  \sum_{\substack{i_j=1\\  j=1\dots k-1}}^{d-1} \mu_{i_1\dots i_{k-1}n} dx^1 \wedge \dots \wedge dx^{k-1} ,
\end{align}
A similar prescription for the tangent bundle yields:	
\begin{equation}\label{vectorboundary}
\sum_{\mu=1}^3 \xi^\mu \frac{\partial}{\partial x^\mu}= \sum_{a=1}^2\xi^a \frac{\partial}{\partial x^a}+ \xi^n \frac{\partial}{\partial x^n},
\end{equation}
and the contraction $\iota_\xi\mu = \iota_\xi\mu^{\paral} + \mu_n\xi^n + \iota_\xi\mu_n dx^n$ restricts\footnote{Here we obviously mean the pullback of forms along $\iota^{\filt{1}}$.} to $\filtint{1}$ as $\iota_\xi\mu\vert_{\filtint{1}} = \iota_\xi\mu^{\paral} + \mu_n\xi^n$. In particular, note that both $\mu^{\paral}$ and $\mu_n$ restrict to $\filtint{1}$. In case no confusion can arise we will simply denote $\mu^{\paral}$ with $\mu$. Furthermore, we denote with $\mu_b$ the component of $\mu$ in direction of $dx^b$ analogously to the notation $\mu_n$ in \eqref{restriction}. 

\begin{remark}\label{rem:notationhighercodimension}
For higher codimension strata, \eqref{restriction} and \eqref{vectorboundary} are modified accordingly. Though, for the sake of clarity, we use different letters: $m$ (resp. $a$) for the {direction transversal} to the second (resp. third) stratum.
\end{remark}

\begin{remark}\label{rem:coordinatedependingquantities}
The decomposition defined above clearly depends on the choice of the coordinate system compatible with the topological structure of the tubular neighborhood. However, this choice becomes relevant only when one wants to explicitly write a map between spaces of different codimension in a coordinate chart. In such circumstance, as mentioned, this choice will depend on an embedding $\filtint{k}\to\filtint{k+1}$ (Corollaries \ref{cor:BFVdata-projection}, \ref{cor:BFFVdata-projection} and \ref{cor:BFFFVdata-projection}). {The BFV data we will construct for each stratum in Propositions \ref{prop:BFVdata}, \ref{prop:BFFVdata} and \ref{prop:BFFFVdata}, however, is coordinate independent at every codimension.}
\end{remark}

We outline here the strategy { we employ to construct the maps} $\pi^{(k)}$ of Definition \ref{def:n-extended-BVBFV}, as was introduced in \cite{CSEH}. We analyse here the extension from a generic codimension-$k$ stratum  to a codimension-$(k+1)$ stratum. 

The goal is to construct data corresponding to the codimension-$(k+1)$ stratum (the $(k+1)$-extended theory). We will assume here for simplicity that $M^{\filt{k+1}}=\partial M^{\filt{k}}$ is a boundary. The first step is to consider the variation of the action functional in the bulk and to construct appropriate data to satisfy \eqref{rCME} and the axioms in Definition \ref{def:n-extended-BVBFV}. The variation of $\delta S^{\filt{k}}$ will consist of two terms generated by integration by parts: the bulk term will be interpreted as $ \iota_{Q^{\filt{k}}} \varpi^{\filt{k}}$ --- defining the Euler--Lagrange equations for the variational problem --- while the remainder, a boundary term, is interpreted as a one-form $\check{\alpha}^{\filt{k+1}}$ on some appropriate space (see below). Namely, we have an equation formally equivalent to \eqref{rCME}
$$ 
\delta S^{\filt{k}}= \iota_{Q^{\filt{k}}} \varpi^{\filt{k}} - {\check{\pi}^{\filt{k}*}} \check{\alpha}^{\filt{k+1}},
$$ 
with the difference that the correction term ${\check{\pi}^{\filt{k}*}} \check{\alpha}^{\filt{k+1}}$ lives on the intermediate space $\check{\mathcal{F}}^{\filt{k+1}}$, defined as the space of fields and transversal jets restricted to the boundary, with ${\check{\pi}^{\filt{k}*}} \colon \mathcal{F}^{\filt{k}} \to \check{\mathcal{F}}^{\filt{k+1}}$ being the restriction of fields to $M^{\filt{k+1}}$. We will call $\check{\mathcal{F}}^{\filt{k+1}}$ space of \emph{pre-boundary} fields.

We define $\check{\varpi}^{\filt{k+1}} \coloneqq \delta\check{\alpha}^{\filt{k+1}}$, which is a closed form, but in general degenerate and hence not symplectic. However, one expects it to be pre-symplectic i.e. such that the kernel of the associated map 
\begin{align*}
\check{\varpi}^{\filt{k+1}\sharp}: T \check{\mathcal{F}}^{\filt{k+1}}& \rightarrow T^* \check{\mathcal{F}}^{\filt{k+1}}\\
X & \mapsto \check{\varpi}^{\filt{k+1}\sharp}(X)=\check{\varpi}^{\filt{k+1}}(X, \cdot)
\end{align*}
is \textit{regular}. This condition, which has to be explicitly checked, shows that the kernel is a subbundle in $T\check{\mathcal{F}}^{\filt{k+1}}$, and one can define the space of boundary fields to be the symplectic reduction\footnote{{Typically, the procedure we employ will provide a chart for the quotient, showing that the symplectic reduction is indeed smooth.}}:
\begin{equation}\label{symplectic-reduction}
	\mathcal{F}^{\filt{k+1}}\coloneqq \qsp{\check{\mathcal{F}}^{\filt{k+1}}}{\mathrm{ker}(\check{\varpi}^{\filt{k+1}\sharp})}
\end{equation}

The symplectic reduction map $\pi_{\mathrm{red}}^{\filt{k+1}}\colon \check{\mathcal{F}}^{\filt{k+1}}\longrightarrow \mathcal{F}^{\filt{k+1}}$ can be computed in a chart by explicitly flowing along the vertical vector fields (i.e. $X \in T \check{\mathcal{F}}^{\filt{k+1}}$ such that $\check{\varpi}^{\filt{k+1}\sharp}(X)=0$). The  projection $\pi^{\filt{k}}: \mathcal{F}^{\filt{k}} \rightarrow \mathcal{F}^{\filt{k+1}}$ to the true space of boundary fields, is then obtained by composing $\check{\pi}^{\filt{k}}$ with the symplectic reduction map\footnote{
The numbering convention we are using is such that the codimension index of the various maps coincides with that of their domain.} $\pi_{\mathrm{red}}^{\filt{k+1}}$, i.e. $\pi^{\filt{k}}\coloneqq \pi_{\mathrm{red}}^{\filt{k+1}}\circ\check{\pi}^{\filt{k}}$. { To summarise, we have
\begin{equation}
    \xymatrix{
        \ar@/_2pc/[dd]_{{\pi}^{\filt{k}}} \mathcal{F}^{\filt{k}} \ar[d]^{\check{\pi}^{\filt{k}}} &   \varpi^{\filt{k}},\ (k-1)\text{-symplectic}\\
        \check{\mathcal{F}}^{\filt{k+1}} \ar[d]^{\pi_{\mathrm{red}}^{\filt{k}}} & \check\varpi^{\filt{k+1}},\ k\text{-presymplectic}\\
        \mathcal{F}^{\filt{k+1}} & \varpi^{\filt{k+1}},\ k\text{-symplectic}
    }
\end{equation}
See the proof of Proposition \ref{prop:BFVdata} for more details.}

To recover the rest of the BV-BFV data on the higher codimension stratum, we first define a \textit{pre-boundary} action functional $\check{S}^{\filt{k+1}}$ on $\check{\mathcal{F}}^{\filt{k+1}}$ \emph{via} an analogue of Equation \eqref{b-action}: 
$$
\iota_{Q^{\filt{k}}} \iota_{Q^{\filt{k}}} \varpi^{\filt{k}} = 2 \check{\pi}^{\filt{k} *} \check{S}^{\filt{k+1}}.
$$
It then follows from the BV-BFV theorems {\cite[Section 3.1.3]{CMR2012}} that $\check{S}^{\filt{k+1}}$ is basic with respect to $\pi^{\filt{k+1}}_{\text{red}}: \check{\mathcal{F}}^{\filt{k+1}} \rightarrow \mathcal{F}^{\filt{k+1}}$: namely, there exists $S^{\filt{k+1}}$ on $\mathcal{F}^{\filt{k+1}}$ such that $\pi^{\filt{k+1}*}_{\text{red}} S^{\filt{k+1}}= \check{S}^{\filt{k+1}}$, and consequently 
 $$
\iota_{Q^{\filt{k}}} \iota_{Q^{\filt{k}}} \varpi^{\filt{k}} = 2 {\pi}^{\filt{k}*}{S}^{\filt{k+1}}.
$$ 
 
{
Moreover, \cite[Section 3.1.3]{CMR} establishes that on $\mathcal{F}^{\filt{k+1}}$ one has the evolutionary, cohomological, projectable vector field $Q^{\filt{k+1}}=\pi^{\filt{k}}_*Q^{\filt{k}}$, which will be Hamiltonian w.r.t. $S^{\filt{k+1}}$ up to higher codimension terms.
}
}

\subsection{1-extended GR theory }
If we allow $M$ to bear a $1$-stratification, and denote codimension-$1$ strata by $M^{\filt{1}}$, denoting by $\Omega_{nd}^1(\filtint{1}, \mathcal{V})$ the space of maps $e\colon T M^{\filt{1}} \longrightarrow \mathcal{V}$ such that the image of $e$ is a linearly independent system, we have the following.

\begin{proposition}\label{prop:BFVdata}
The BV theory $\mathfrak{F}^{\uparrow 0}_{GR}=(\mathcal{F}_{GR}, S_{GR}, \varpi_{GR}, Q_{GR})$ is 1-extendable  to $\mathfrak{F}^{\uparrow 1}_{GR}$. The codimension-$1$ data are: 
\begin{itemize}
\item The space of codimension-$1$ fields, given by the bundle
\begin{equation}
\mathcal{F}^{\filt{1}}_{GR} \longrightarrow \Omega_{nd}^1(\filtint{1}, \mathcal{V}),
\end{equation}
with local trivialisation on an open $\mathcal{U}^{\filt{1}} \subset \Omega_{nd}^1(\filtint{1}, \mathcal{V})$
\begin{equation}\label{LoctrivF1}
\mathcal{F}^{\filt{1}}_{GR}\simeq \mathcal{U}^{\filt{1}} \times \Omega^1( \filtint{1}, \wedge^2\mathcal{V})\oplus T^* \left(\Omega^0[1]( \filtint{1}, \wedge^2\mathcal{V})\oplus \mathfrak{X}[1](\filtint{1}) \oplus C^\infty[1](\filtint{1})\right),
\end{equation}
and fields denoted by $\te\in \mathcal{U}^{\filt{1}}$ and $\tom\in\Omega^1(\filtint{1},\wedge^2\mathcal{V})$ in degree zero, $\tc\in\Omega^0[1](\filtint{1},\wedge^2\mathcal{V})$, $\txi{}\in\mathfrak{X}[1](\filtint{1})$ and $\txi{n}\in C^\infty[1](\filtint{1})$ in degree one, $\tom^\dag\in\Omega^3[-1](\filtint{1},\mathcal{V})$ and $\te^\dag\in\Omega^{2}[-1](\filtint{1},\wedge^2 \mathcal{V})$ in degree minus one, together with a fixed vector field $\epsilon_n \in \Gamma(\mathcal{V})$, completing the image of elements $\te\in\mathcal{U}^{\filt{1}}$ to a basis of  $\mathcal{V}$;
\item The codimension-$1$ one-form, symplectic form and action functional
\begin{subequations}\label{GR_BFV_data}\begin{align}
\alpha^{\filt{1}}_{GR}&= \trintl{\filtint{1}} - \te \delta \tom+ \tom^\dag \delta \tc -\te^{\dag} \epsilon_n \delta \txi{n}- (\iota_{\delta \txi{}} \te) \te^{\dag}+ \iota_{\txi{}}\tom^\dag \delta \tom,\\\label{BoundarytwoformGR}
\varpi^{\filt{1}}_{GR}&= \trintl{\filtint{1}} - \delta\te \delta \tom+  \delta\tom^\dag \delta \tc - \delta\te^{\dag} \epsilon_n \delta \txi{n}+ \iota_{\delta \txi{}} \delta({\te_\bullet\otimes} \te^{\dag})+ \delta(\iota_{\txi{}}\tom^\dag )\delta \tom,\\\label{BoundaryactionGR}
S^{\filt{1}}_{GR}&= \trintl{\filtint{1}} - \iota_{\txi{}} \te F_{\tom} -\epsilon_n \txi{n} F_{\tom} - \tc d_{\tom} \te + \frac12 [\tc,\tc]\tom^{\dag} +\frac12 \iota_{\txi{}} \iota_{\txi{}} F_{\tom}\tom^\dag+ \frac12 \iota_{[\txi{},\txi{}]}\te\te^\dag \nonumber\\ &+\tc d_{\tom}( \iota_{\txi{}} \tom^\dag)+ L_{\txi{}}^{\tom} (\epsilon_n \txi{n}) \te^\dag- [\tc, \epsilon_n \txi{n}]\te^\dag,
\end{align}\end{subequations}
where $\iota_{\delta \txi{}} \delta({\te_\bullet\otimes} \te^{\dag})$ indicates that the contraction is considered with respect to the one-form\footnote{{In a local chart we can write $\iota_{\delta \txi{}} \delta({\te_\bullet\otimes} \te^{\dag}) = \delta\xi^a \delta(e_a e^\dag)$.}} $\te$;
\item The cohomological vector field $Q^{\filt{1}}_{GR}$
\begin{subequations}\label{Q_GR-boundary1}
\begin{align}
 Q^{\filt{1}}_{GR} {\te} &=- L_{\txi{}}^{\tom} \te + d_{\tom}(\epsilon_n \txi{n})+[\tc, \te]- X_n^{\comp{a}} \tom_a^\dag+Y_n^{\comp{a}} \tom_a^\dag  \\
 Q^{\filt{1}}_{GR} {\tom} &= - \iota_{\txi{}} F_{\tom} + d_{\tom}\tc - X_n^{\comp{a}} \te_a^\dag+Y_n^{\comp{a}} \te_a^\dag\\
 Q^{\filt{1}}_{GR} {\tc} &= \frac{1}{2}[\tc, \tc]- \frac{1}{2} \iota_{\txi{}} \iota_{\txi{}} F_{\tom}- \iota_{\txi{}}X_n^{\comp{a}} \te_a^\dag+\iota_{\txi{}}Y_n^{\comp{a}} \te_a^\dag\\
 Q^{\filt{1}}_{GR} {\tom^\dag} &=- d_{\tom}\te + [\tc, \tom^\dag] + d_{\tom}\iota_{\txi{}} \tom ^\dag - [\epsilon_n \txi{n},\te^\dag]\\
 Q^{\filt{1}}_{GR} {\te^\dag} &=-F_{\tom}+ [\tc, \te^\dag] + d_{\tom}\iota_{\txi{}} \te ^\dag\\
 Q^{\filt{1}}_{GR} {\txi{}} &=- \frac{1}{2}[\txi{},\txi{}] - \left(X_n^{\comp{a}}+Y_n^{\comp{a}}\right)\frac{\partial}{\partial x^a}  \\
 Q^{\filt{1}}_{GR} {\txi{n}} &=+ X_n^{\comp{n}} -Y_n^{\comp{n}}
\end{align}
\end{subequations}
where $X_n= L_{\txi{}}^{\tom}(\epsilon_n \txi{n})$ and $Y_n= [\tc, \epsilon_n \txi{n}]$, and the superscript ${}^{\comp{n}}$ denotes the component with respect to $\epsilon_n$, following the notation of Section \ref{sec:Notation};
\item The projection to codimension-$1$ fields ${\pi}^{\filt{0}}_{GR} = {\pi_{\mathrm{red}}}^{\filt{1}} \circ {\check{\pi}}^{\filt{0}}$ is smooth, where ${\check{\pi}}^{\filt{0}}: \mathcal{F}^{\filt{0}}\to \check{\mathcal{F}}^{\filt{1}}$ is the restriction of the codimension-$0$ fields to the $1$-stratum $M^{\filt{1}}$ and ${\pi_{\mathrm{red}}}^{\filt{1}}: \check{\mathcal{F}}^{\filt{1}}\to \mathcal{F}^{\filt{1}}$ is the symplectic reduction \eqref{symplectic-reduction}.
\end{itemize}
\end{proposition}

\begin{remark}
Before we produce the proof of Proposition \ref{prop:BFVdata}, let us observe that all the expressions in \eqref{Q_GR-boundary1} (and more manifestly in \eqref{GR_BFV_data}) are coordinate independent. Indeed,  expressions like $\iota_{\txi{}}X_n^{\comp{a}}\te_a^\dag$ or $X^{\comp{a}}\frac{\partial}{\partial x^a}$ are tensorial in a local chart of $M^{(1)}$, while the component ${}^{\comp{n}}$ refers to the completion of a basis $\{\mathrm{Im}(\te), \epsilon_n\}$. Furthermore, the projection $\pi^{\filt{0}}_{GR}$ as defined above is coordinate independent (both restrictions and symplectic reductions can be defined without the use of a coordinate system). However, the explicit expression of the BFV map $\pi^{\filt{0}}_{GR}$ presented later, in Corollary \ref{cor:BFVdata-projection}, depends on the choice of an embedding $\filtint{1}$ in $\filtint{0}$ and a chart on a tubular neighbourhood (cf. Remark \ref{rem:coordinatedependingquantities}).
\end{remark}

\begin{proof}
From the variation of the action \eqref{GR-bulk-action}, following the strategy outlined in subsection \ref{sec:Notation}, we get the pre-boundary one-form by isolating the boundary terms:
\begin{align*}
\check{\alpha}^{\filt{1}}=\trintl{\filtint{1}} & -e\delta\omega + e^\dag_n\xi^n\delta e + e^\dag\delta(e_n\xi^n) + e^\dag\iota_{\delta\xi}e  + \iota_\xi\omega^\dag\delta\omega + \omega^\dag_n\xi^n\delta\omega + \omega^\dag\delta c \\
-& \left( \iota_\xi c^\dag_n\xi^n\right)\delta\omega - \xi^n\iota_{\delta\xi}\chi\mathsf{V} -  \xi^n\delta\xi^n\chi_n\mathsf{V}
\end{align*}
where $\xi^{\dag}= \chi\otimes \mathsf{V} \equiv \chi \mathsf{V}$. Then, we derive the two form $\check{\varpi}^{\filt{1}}= \delta \check{\alpha}^{\filt{1}}$:
\begin{align*}
\check{\varpi}^{\filt{1}}=\trintl{\filtint{1}}& -\delta e\delta\omega + \delta e^\dag_n\xi^n\delta e + e^\dag_n\delta\xi^n\delta e + \delta e^\dag\delta(e_n\xi^n) + \delta e^\dag\iota_{\delta\xi} e + e^\dag \iota_{\delta\xi}\delta e \\ &+  \iota_{\delta\xi}\omega^\dag \delta\omega
 + \delta(\omega^\dag_n\xi^n)\delta\omega +  \iota_\xi\delta\omega^\dag\delta\omega + \delta\omega^\dag\delta c - \delta\left( \iota_\xi c^\dag_n\xi^n\right)\delta\omega \\
 &- \delta\xi^n\iota_{\delta\xi}\chi\mathsf{V} - \xi^n\iota_{\delta\xi}\delta\chi \mathsf{V} -\delta\xi^n\delta\xi^n\chi_n\mathsf{V} + \xi^n\delta\xi^n\delta\chi_n\mathsf{V}.
\end{align*}
We want to make sure that $\check{\varpi}^{\filt{1}}$ is pre-symplectic: the kernel of such a two-form is defined by the equations
\begin{subequations}\label{kernelsettom1}\begin{align}
\X{e} &= \iota_{\X{\xi}}\omega^\dag - \X{\xi^n}\omega^\dag_n+ \iota_\xi\X{\omega^\dag} + \X{\omega^\dag_n}\xi^n  - \X{\iota_\xi c_n^\dag \xi^n}\\
\X{\omega} &= \iota_{\X{\xi}}e^\dag + \X{\xi^n}e^\dag_n + \X{e^\dag_n}\xi^n\\
\X{\xi^\mu}  e_{\mu}&= -\X{e_n}\xi^n \label{xi-e} \\
\X{c}&=\iota_\xi\X{\omega}\\
\X{\omega^\dag}&=0\\
e_n(X_{e^\dag})&= -(X_e)e^\dag_n + (X_\omega)\omega^\dag_n  - (X_\omega) c^\dag_{nb}\xi^b \nonumber\\ \label{edagn} 
&+ \left(2(X_{\xi^n})\chi_n + (X_{\chi_n})\xi^n +(X_{\xi^a})\chi_a\right)\mathsf{V}\\\label{edaga}
e_a(X_{e^\dag})&=-(X_e)e^\dag_a + (X_\omega)\omega_a^\dag - (X_\omega)c^\dag_{an}\xi^n  +\left( (X_{\xi^n})\chi_a  + (X_{\chi_a})\xi^n\right)\mathsf{V},
\end{align}\end{subequations}
together with
\begin{subequations}\label{kernelset2}
\begin{align}
\X{e^\dag}\xi^n&=0\\
\iota_\xi\X{\omega}\xi^n &=0\\
\X{e}\xi^n &=0\\
\X{\omega}\xi^n&=0\\
\X{\xi}^\rho\xi^n&=0.
\end{align}
\end{subequations}
We first solve \eqref{xi-e}: expanding $ \X{e_n}$ in the basis $\{e_{\mu}\}_{\mu =1,2,n}$ we get a vector equation whose single components are
$$\X{\xi^\mu}  = -\X{e_n}^{\comp{\mu}}\xi^n \text{  for } \mu= 1,2,n.$$
It is then possible to solve equations \eqref{kernelsettom1} to yield
\begin{subequations}\label{kernelset3}\begin{align}
\X{\omega^\dag}&=0\\
\X{\xi^\mu}  &= -\X{e_n}^{\comp{\mu}}\xi^n\\
\X{\omega}&=-\X{e_n}^{\comp{b}}\xi^n e^\dag_b - \X{e_n}^{\comp{n}}\xi^n e^\dag_n + \X{e_n^\dag}\xi^n\\\notag
\X{e}&=\X{e_n}^{\comp{b}}\xi^n\omega^\dag_b + \X{e_n}^{\comp{n}}\xi^n\omega^\dag_n + \X{\omega^\dag_n}\xi^n 
-\iota_\xi\X{c^\dag_n}\xi^n\\ &+ \X{e_n}^{\comp{n}}\iota_\xi c^\dag_n\xi^n \\
\X{c}&= \iota_\xi \X{\omega}
\end{align}
\begin{align}\notag
p_a^{\dag}\X{e^\dag}&= \Big[\big(\X{e_n}^{\comp{b}}\xi^n\omega^\dag_b + \X{e_n}^{\comp{n}}\xi^n\omega^\dag_n + \X{\omega^\dag_n}\xi^n 
-\iota_\xi\X{c^\dag_n}\xi^n\big)e^\dag_a \\\notag &+ \X{e_n}^{\comp{n}}\iota_\xi c^\dag_n\xi^n e^\dag_a+ \big(-\X{e_n}^{\comp{b}}\xi^n e^\dag_b - \X{e_n}^{\comp{n}}\xi^n e^\dag_n + \X{e_n^\dag}\xi^n\big)\omega_a^\dag \\\notag &- \big(-\X{e_n}^{\comp{b}}\xi^n e^\dag_b - \X{e_n}^{\comp{n}}\xi^n e^\dag_n + \X{e_n^\dag}\xi^n\big)c^\dag_{an}\xi^n\Big]^{\comp{a}}\\ & - \X{e_n}^{\comp{n}}\xi^n\chi_a \mathsf{V}^{\comp{a}} + \X{\chi_\bullet}\xi^n\mathsf{V}^{\comp{a}}
\end{align}
\begin{align}\notag
p_n^{\dag}\X{e^\dag}&= \Big[ \big(\X{e_n}^{\comp{b}}\xi^n\omega^\dag_b + \X{e_n}^{\comp{n}}\xi^n\omega^\dag_n + \X{\omega^\dag_n}\xi^n 
-\iota_\xi\X{c^\dag_n}\xi^n\big)e^\dag_n\\\notag
&+ \X{e_n}^{\comp{n}}\iota_\xi c^\dag_n\xi^n e^\dag_n+ \big(-\X{e_n}^{\comp{b}}\xi^n e^\dag_b - \X{e_n}^{\comp{n}}\xi^n e^\dag_n + \X{e_n^\dag}\xi^n\big)\omega_n^\dag \\\notag
&- \big(-\X{e_n}^{\comp{b}}\xi^n e^\dag_b - \X{e_n}^{\comp{n}}\xi^n e^\dag_n + \X{e_n^\dag}\xi^n\big)c^\dag_{nb}\xi^b\\
&- 2\X{e_n}^{\comp{n}}\xi^n\chi_n\mathsf{V} + \X{\chi_n}\xi^n\mathsf{V} -\X{e_n}^{\comp{a}}\xi^n\chi_a\mathsf{V}\Big]^{\comp{n}}
\end{align}\end{subequations}
and, since Equations \eqref{kernelset3} force the left hand sides to be proportional to the odd field $\xi^n$, Equations \eqref{kernelset2} are automatically satisfied. This shows that the kernel has \emph{constant rank} --- i.e. $\check{\varpi}^{\filt{1}}$ is pre-symplectic --- and we can perform symplectic reduction.

We now compute the  BV-BFV data by presenting a chart for the symplectic reduction 
$$
\mathcal{F}_{GR}^{\filt{1}}\coloneqq \qsp{\check{\mathcal{F}}^{\filt{1}}_{GR}}{\mathrm{ker}(\check{\varpi}^{\filt{1}\sharp})}.
$$
We flow along vertical vector fields (i.e. vector fields in the kernel of $\check{\varpi}^{\filt{1}\sharp}$) to obtain boundary coordinates. In other words, denoting by $\varphi_{Y}$ the flow of a vector field $Y$ at time $s=1$, we define the change of coordinates on a field $\psi$ to be given by 
$$
\widetilde{\psi} \coloneqq (\varphi_{\mathbb{E}_n}\circ\varphi_{\mathbb{E}^\dag_n}\circ\varphi_{\mathbb{C}^\dag_n}\circ \varphi_{\mathbb{\Omega}^\dag_n}\circ \varphi_{\mathbb{X}_n}\circ\varphi_{\mathbb{X}_a})(\psi),
$$
where the vertical vector fields $\mathbb{E}_n,\mathbb{E}_n^\dag,\mathbb{C}_n,\mathbb{\Omega}_n^\dag,\mathbb{X}_n$ and $\mathbb{X}_a$ read
\begin{subequations}\begin{align}\notag
\mathbb{E}_n &= \X{e_n}\pard{}{e_n} - \left(\X{e_n}^{\comp{a}} e^\dag_a \xi^n+ \X{e_n}^{\comp{n}} e^\dag_n\xi^n\right)\pard{}{\omega}\\\notag 
&- \iota_\xi\left(\X{e_n}^{\comp{a}} e^\dag_a \xi^n+ \X{e_n} e^\dag_n\xi^n\right)\pard{}{c}\\\notag
 &+ \left(-\X{e_n}^{\comp{a}} \omega^\dag_a\xi^n -\X{e_n}^{\comp{n}}\omega^\dag_n\xi^n + \X{e_n}^{\comp{n}}\iota_\xi c^\dag_n\xi^n\right)\pard{}{e} - \X{e_n}^\rho\xi^n\pard{}{\xi^\rho}\\\notag
 &-\Big[\big(\X{e_n}^{\comp{b}}\xi^n\omega^\dag_b + \X{e_n}^{\comp{n}}\xi^n\omega^\dag_n + \X{e_n}^{\comp{n}}\iota_\xi c^\dag_n\xi^n\big)e^\dag_a \\\notag &+ \big(\X{e_n}^{\comp{b}}\xi^n e^\dag_b + \X{e_n}^{\comp{n}}\xi^n e^\dag_n \big)\omega_a^\dag+ \big(\X{e_n}^{\comp{b}}\xi^n e^\dag_b + \X{e_n}^{\comp{n}}\xi^n e^\dag_n \big)c^\dag_{an}\xi^n\\ \notag& - \X{e_n}^{\comp{n}}\xi^n\chi_a \mathsf{V}\Big]^{\comp{a}}  \pard{}{p_a^{\dag}e^\dag}\\\notag
&+\Big[ \big(\X{e_n}^{\comp{b}}\xi^n\omega^\dag_b + \X{e_n}^{\comp{n}}\xi^n\omega^\dag_n + \X{e_n}^{\comp{n}}\iota_\xi c^\dag_n\xi^n\big)e^\dag_n\\\notag
&- \big(\X{e_n}^{\comp{b}}\xi^n e^\dag_b + \X{e_n}^{\comp{n}}\xi^n e^\dag_n \big)\omega_n^\dag + \big(-\X{e_n}^{\comp{b}}\xi^n e^\dag_b + \X{e_n}^{\comp{n}}\xi^n e^\dag_n \big)c^\dag_{nb}\xi^b\\
&- 2\X{e_n}^{\comp{n}}\xi^n\chi_n\mathsf{V}  -\X{e_n}^{\comp{a}}\xi^n\chi_a\mathsf{V}\Big]^{\comp{n}}
\pard{}{p_n^{\dag}e^\dag}
\end{align}
\begin{align}\notag
\mathbb{E}^\dag_n &= \X{e^\dag_n}\pard{}{e^\dag_n} + \X{e^\dag_n}\xi^n \pard{}{\omega} + \iota_\xi\X{e^\dag_n}\xi^n \pard{}{c} \\
&+ \left[\X{e^\dag_n}\xi^n\omega^\dag_a\right]^{\comp{a}}\pard{}{p_a^{\dag}e^\dag} + \left[ \X{e^\dag_n}\xi^n\iota_\xi c^\dag_n + \X{e^\dag_n}\xi^n\omega^\dag_n\right]^{\comp{n}}\pard{}{p_n^{\dag}e^\dag}\\
\mathbb{C}^\dag_n&=\X{c^\dag_n}\pard{}{c^\dag_n} - \iota_\xi\X{c^\dag_n}\xi^n\pard{}{e} + \left(\iota_\xi\X{c^\dag_n}\xi^n e^\dag_n \right)^{\comp{n}}\pard{}{p_n^{\dag}e^\dag}\notag \\ &+ \left(\iota_\xi\X{c^\dag_{na}}\xi^n e^\dag \right)^{\comp{a}}\pard{}{p_a^{\dag}e^\dag}\\
\mathbb{\Omega}^\dag_n &= \X{\omega^\dag_n} \pard{}{\omega^\dag_n} + \X{\omega^\dag_n}\xi^n \pard{}{e} - \left(\X{\omega^\dag_n}\xi^n e^\dag_n \right)^{\comp{n}}\pard{}{p_n^{\dag}e^\dag}- \left(\X{\omega^\dag_{na}}\xi^n e^\dag \right)^{\comp{a}}\pard{}{p_a^{\dag}e^\dag}\\
\mathbb{X}_n&= \X{\chi_n}\pard{}{\chi_n} + \left(\X{\chi_n}\xi^n\mathsf{V}\right)^{\comp{n}}\pard{}{p_n^{\dag}e^\dag}\\
\mathbb{X}_a&=\X{\chi_a}\pard{}{\chi_a} + \X{\chi_a}\xi^n\mathsf{V}^{\comp{a}}\pard{}{p_a^{\dag}e^\dag} 
\end{align}
\end{subequations}

{Using $\mathbb{X}_a$ and $\mathbb{X}_n$ we can eliminate $\chi_a$ and $\chi_n$. From the differential equation $\dot{\chi}_\rho(s)=\X{\chi_\rho}$ ($\rho= a, n$), we conclude that 
$$\chi_\rho(s=1)=0 \iff \X{\chi_\rho}=-\chi_\rho(0),$$ 
with $s$ parametrising the flow of the vector field. The equation induced by the flow of $\mathbb{X}_n$ reads
\begin{equation}
    \frac{d}{ds}(p_n^\dag e^\dag) = \left(-\chi_\rho(0)\xi^n\mathsf{V}\right)^{\comp{n}},
\end{equation}
and similarly for $\mathbb{X}_a$. It follows that, if we define $p_n^{\dag}e^{\dag }[1]\coloneqq (p_n^{\dag}e^{\dag })(s=1)$, we have
\begin{eqnarray}
p_n^{\dag}e^{\dag }[1] =p_n^{\dag} e^\dag - \chi_n\xi^n\mathsf{V}^{\comp{n}}; & p_a^{\dag}e^{\dag}[1] = p_a^{\dag} e^\dag - \chi_a\xi^n\mathsf{V}^{\comp{a}}.
\end{eqnarray} 
The numbered square bracket ${[1]}$ denotes the step-by-step reconstruction of the change of variables: indeed, the variable $p_n^{\dag} e^{\dag}$ gets transformed first by (the flow of) $\mathbb{X}_n$, and then by $\mathbb{\Omega}^\dag_n$ and $\mathbb{E}^\dag_n$. Flowing along each of these vertical vector fields defines a temporary change of coordinates, which will be denoted by ${[k]}$. 

Indeed, using $\mathbb{\Omega}^\dag_n$ we can dynamically set $\omega_n^\dag(s=1)=0$ by picking $\X{\omega^\dag_n}=-\omega^\dag_n(0)$. This choice induces differential equations for the newly defined variables:
\begin{align}
    \frac{d}{ds}(p_n^\dag e^\dag[1]) = (\omega^\dag_n\xi^n e^\dag_n)^{\comp{a}} \iff (p_n^\dag e^\dag[1])(s) = p_n^\dag e^\dag[1](0) + (\omega^\dag_n\xi^n e^\dag_n)^{\comp{n}}s,\\
    \frac{d}{ds}(p_a^\dag e^\dag[1]) = (\omega^\dag_n\xi^n e^\dag_n)^{\comp{a}} \iff (p_a^\dag e^\dag[1])(s) = p_a^\dag e^\dag[1](0) + (\omega^\dag_n\xi^n e^\dag_n)^{\comp{a}}s,
\end{align}
as well as
\begin{equation}
    \frac{d}{ds}e = -\omega^\dag_n \xi^n \iff e(s) = e(0) -\omega^\dag_n \xi^n,
\end{equation}
and defining the new temporary variables by $p_n^\dag e^\dag[2]\coloneqq p_n^\dag e^\dag[1](s=1)$ (and similarly for $p_a^\dag e^\dag[2]$, we obtain:
\begin{eqnarray}
 e[1]\coloneqq e(s=1) = e - \omega^\dag_n\xi^n; \notag\\ (p_n^{\dag}e^\dag){[2]}=(p_n^{\dag}e^{\dag}){[1]} + \left(\omega_n^\dag\xi^n e^\dag_n \right)^{\comp{n}}; & (p_a^{\dag}e^\dag){[2]}=(p_a^{\dag}e^\dag){[1]} + \left(\omega_{na}^\dag\xi^n e^\dag \right)^{\comp{a}}
\end{eqnarray}

Moving on to $\mathbb{E}^\dag_n$ with an analogous procedure, we solve the associated differential equations to yield
\begin{eqnarray}
\omega{[1]}=\omega -e^\dag_n\xi^n; & c{[1]}=c - \iota_\xi e^\dag_n\xi^n; \notag\\
(p_a^{\dag}e^\dag){[3]} = (p_a^{\dag}e^\dag){[2]} - (e^\dag_n\xi^n\omega^\dag_a)^{\comp{a}}; &  (p_n^{\dag}e^\dag){[3]} = (p_n^{\dag}e^\dag){[2]} - (e^\dag_n \iota_\xi c^\dag_n\xi^n)^{\comp{n}}
\end{eqnarray}	
while using $\mathbb{C}^\dag_n$ in the same fashion we can conclude:
\begin{eqnarray}
e{[2]} = e{[1]} +\iota_\xi c^\dag_n \xi^n; &  (p_a^{\dag}e^\dag){[4]} = (p_a^{\dag}e^\dag){[3]} - \left(\iota_\xi c^\dag_{na}\xi^n e^\dag\right)^{\comp{a}}.
\end{eqnarray}}
Notice that we did not consider the coefficient of $\pard{}{e^\dag}$ in $\mathbb{E}^\dag_n$, for have fixed the values $c^\dag_n=\omega_n^\dag=0$ at the internal parameter $s=1$ along the flow of the previously employed vector fields.\\
Now it is time to turn to $\mathbb{E}_n$. Its simplified expression after flowing along the other vector fields is
\begin{align}\notag
\mathbb{E}_n =& \X{e_n}\pard{}{e_n} - \X{e_n}^{\comp{a}} e^\dag_a \xi^n\pard{}{\omega} - \iota_\xi \X{e_n}^{\comp{a}} e^\dag_a \xi^n\pard{}{c}\\ -& \X{e_n}^{\comp{a}} \omega^\dag_a\xi^n \pard{}{e} - \X{e_n}^\rho\xi^n\pard{}{\xi^\rho}
\end{align}
We want to flow along its integrating diffeomorphism and set the field $e_n$ to a given value. In this case we cannot set $e_n(s=1)=0$ because this would violate the nondegeneracy requirement for the triad field. We will fix $e_n(1)$ to a vector $\epsilon_n\in V$ proportional to the original $e_n$, pointwise, and thus linearly independent from (the vectors in the image of) $e$. Observe that for an open subset $\mathcal{U}^{\filt{1}}\subset \Omega_{nd}^1(\filtint{1}, \mathcal{V})$ the choice of $\epsilon_n$ is independent of $e\in \mathcal{U}^{\filt{1}}$. The differential equation $\dot{e}_n = \X{e_n}$ is solved as $e_n(s)=e_n(0) + \X{e_n}s$, so that, fixing $e_n(s=1)\equiv (1+ \varepsilon) e_n(0)$ yields the flow:
\begin{eqnarray}
\X{e_n}=\varepsilon e_n(0); & e_n(s)=e_n(0)(1 + \varepsilon s),
\end{eqnarray}
{ with $\varepsilon \in C^\infty(M^{\filt{1}})$, $\varepsilon >0$}. In order to compute the other flows, we have to consider the components of $ \X{e_n}$ in the (varying) basis vectors $\{e_a\}$. With our choice we have  $\X{e_n} \propto e_n(0) \propto e_n(s)$. Hence we obtain 
$$ \X{e_n}^{\comp{a}} =0 \text{ and } \X{e_n}^{\comp{n}}(s)= \frac{\varepsilon}{1+ \varepsilon s}$$
Hence, looking at the expression for $\mathbb{E}_n$  the only equation that we have to consider is
\begin{equation}
\dot{\xi}^n = - \X{e_n}^{\comp{n}} \xi^n
\end{equation}
and we easily find
$${\xi}^n(s)= \frac{1}{1+ \varepsilon s}{\xi}^n(0)$$
so that $${\xi}^{n}[1]= \frac{1}{1+ \varepsilon }{\xi}^n.$$
Gathering what we have done so far, defining $\epsilon_n:=(1+ \varepsilon) e_n$ (now a fixed vector field), {we have constructed a chart on the symplectic reduction 
$$\mathcal{F}^{\filt{1}}\coloneqq \qsp{\check{\mathcal{F}}^{\filt{1}}}{\mathrm{ker}(\check{\varpi}^{\filt{1}\sharp})}$$ 
so that the map $\pi_{\mathrm{red}}^{\filt{1}}\colon \check{\mathcal{F}}^{\filt{1}}\longrightarrow \mathcal{F}^{\filt{1}}$ reads:}
\begin{equation}\label{inproofProjectionbulktoboundary}
{\pi}^{\filt{1}}_{\text{red}}:
\begin{cases}
\te:=  e - \omega^\dag_n\xi^n +\iota_\xi c^\dag_n\xi^n\\
\tom:= \omega - e^\dag_n\xi^n\\
\tc:=  c -  \iota_\xi e^\dag_n\xi^n \\
\txi{n} := (1+\varepsilon)^{-1}\xi^n\\
\txi{a} := \xi^a \\
\tilde{f_a}^{\dag} :=p_a^{\dag}e^{\dag} - \chi_a\xi^n\mathsf{V}^{\comp{a}}  +\left(\omega_{na}^\dag\xi^n e^\dag \right)^{\comp{a}}- (e^\dag_n\xi^n\omega^\dag_a)^{\comp{a}} - \left(\iota_\xi c^\dag_{na}\xi^n e^\dag\right)^{\comp{a}}\\ 
\tilde{f_n}^{\dag} :=p_n^{\dag}e^{\dag} - \chi_n\xi^n\mathsf{V}^{\comp{n}}  + \left(\omega_n^\dag\xi^n e^\dag_n \right)^{\comp{n}}- (e^\dag_n \iota_\xi c^\dag_n\xi^n)^{\comp{n}}\\
\tom^\dag:=\omega^\dag
\end{cases}
\end{equation}
with $\te\in\mathcal{U}\subset \Omega_{nd}^1(\filtint{1}, \mathcal{V})$, and the BV-BFV map\footnote{Observe that, from now on, we will omit mentioning precomposition with $\check{\pi}$, when no confusion can arise.} $\pi^{\filt{0}} \colon \mathcal{F}^{\filt{0}} \longrightarrow \mathcal{F}^{\filt{1}}$ is 
$$
\pi^{\filt{0}}\coloneqq \pi_{\mathrm{red}}^{\filt{1}}\circ\check{\pi}^{\filt{0}}.
$$ 
This data defines a chart for the (locally trivialised) bundle
$$
\mathcal{F}^{\filt{1}}_{GR}\simeq \mathcal{U}^{\filt{1}} \times \Omega^1( \filtint{1}, \wedge^2\mathcal{V})\oplus T^* \left(\Omega^0[1]( \filtint{1}, \wedge^2\mathcal{V})\oplus \mathfrak{X}[1](\filtint{1}) \oplus C^\infty[1](\filtint{1})\right)
$$
since, for all $\te\in \mathcal{U}$ we can fix a completion $\epsilon_n$ that does not depend on $\te$.

An easy computation shows that $\check{\alpha}^{\filt{1}}$ is not basic (in particular it is not horizontal, i.e. $\iota_{\mathbb{E}_n}\check{\alpha}^{\filt{1}} \neq 0$), but it descends to the quotient upon adding the term $\delta (e e^\dag_n \xi^n)$. In the local chart defined by \eqref{inproofProjectionbulktoboundary}, we define:\footnote{Since $\delta \epsilon_n=0$ we obtain, in the space of preboundary fields, $\delta \varepsilon e_n = -(1+\varepsilon)\delta e_n$. Hence $\epsilon_n \delta \txi{n}= (1+ \varepsilon)e_n \delta (1+\varepsilon)^{-1} \xi^n+ e_n \delta \xi^n=-(1+ \varepsilon)e_n  (1+\varepsilon)^{-2} \delta \varepsilon \xi^n+ e_n \delta \xi^n= \delta e_n \xi^n+e_n \delta \xi^n$.} 
 \begin{align*}
\alpha^{\filt{1}}:= \trintl{\filtint{1}} - \te \delta \tom+ \tom^\dag \delta \tc - \tilde{f_n}^{\dag} \epsilon_n \delta \txi{n}- \iota_{\delta \txi{}} \te \tilde{f_n}^{\dag}- \iota_{\delta \txi{}} \te \tilde{f_a}^{\dag}+ \iota_{\txi{}}\tom^\dag \delta \tom,
\end{align*}
so that $\check{\alpha}^{\filt{1}}+\delta (e e^\dag_n \xi^n)= \pi_{\mathrm{red}}^{\filt{1}*} \alpha^{\filt{1}}$ and, consequently,
 \begin{align*}
\varpi^{\filt{1}}:= \trintl{\filtint{1}} &- \delta\te \delta \tom+  \delta\tom^\dag \delta \tc - \delta\tilde{f_n}^{\dag} \epsilon_n \delta \txi{n}+ \iota_{\delta \txi{}} \delta(\te \tilde{f_n}^{\dag}) \\ &+ \iota_{\delta \txi{}}\delta( \te \tilde{f_a}^{\dag})+ \delta(\iota_{\txi{}}\tom^\dag )\delta \tom.
\end{align*}

{

Given any element $k \in \Omega^2(\partial M, \wedge^2 \mathcal{V})$ and a basis $\{v_i\}_{i=1,2,3}$ of the vector space $\mathcal{V}$ we can build a diffeomorphism  
\begin{align*}
\phi : \Omega^2(\partial M, \wedge^2 \mathcal{V}) &\rightarrow \prod_{i=1}^3 p_i^{\dag}\Omega^2(\partial M, \wedge^2 \mathcal{V})\\
k  & \mapsto \left( p_1^{\dag}k,  p_2^{\dag} k,  p_3^{\dag} k \right)
\end{align*}
where $p_i^{\dag}$ is the projection defined by \eqref{projection_p_i^dag} with inverse given by $\left( p_1^{\dag} k,  p_2^{\dag} k,  p_3^{\dag}k \right) \mapsto  p_1^{\dag} k + p_2^{\dag} k + p_3^{\dag} k = k$.
We now observe that the quantities $\tilde{f_a}^{\dag}$ and $\tilde{f_n}^{\dag}$ in \eqref{inproofProjectionbulktoboundary} satisfy $ p_a^{\dag}\tilde{f_a}^{\dag}=\tilde{f_a}^{\dag}$ and $ p_n^{\dag}\tilde{f_n}^{\dag}=\tilde{f_n}^{\dag}$ where  $p_a^{\dag}$ and $p_n^{\dag}$ are with respect to the basis $(e_a, e_n)$. Hence defining 
$\te^\dag := \tilde{f_a}^{\dag} + \tilde{f_n}^{\dag}$ we obtain that the image of $\te^\dag$ under the diffeomorphism $\phi$ is the couple  $\left (\tilde{f_a}^{\dag},\tilde{f_n}^{\dag}\right)$. 
Since the codimension-$1$ quantities depend only on the $\te^\dag$ as defined above, we conclude that we can use it as a basis independent field on $\filtint{1}$.

} Hence the odimension-$1$ forms can be more conveniently rewritten as
 \begin{align*}
\alpha^{\filt{1}}& := \trintl{\filtint{1}} - \te \delta \tom+ \tom^\dag \delta \tc -\te^{\dag} \epsilon_n \delta \txi{n}- \iota_{\delta \txi{}} \te \te^{\dag}+ \iota_{\txi{}}\tom^\dag \delta \tom\\
\varpi^{\filt{1}} &:= \trintl{\filtint{1}} - \delta\te \delta \tom+  \delta\tom^\dag \delta \tc - \delta\te^{\dag} \epsilon_n \delta \txi{n}+ \iota_{\delta \txi{}} \delta(\te \te^{\dag})+ \delta(\iota_{\txi{}}\tom^\dag )\delta \tom.
\end{align*}
{
where $\te\in \mathcal{U}^{\filt{1}}$, $\tom\in\Omega^1(\filtint{1},\wedge^2\mathcal{V})$, $\tc\in\Omega^0[1](\filtint{1},\wedge^2\mathcal{V})$, $\txi{}\in\mathfrak{X}[1](\filtint{1})$, $\txi{n}\in C^\infty[1](\filtint{1})$, $\tom^\dag\in\Omega^3[-1](\filtint{1},\mathcal{V})$ and $\te^\dag\in\Omega^{2}[-1](\filtint{1},\wedge^2 \mathcal{V})$.

Observe that, due to the arbitrariness of $\epsilon_n$, and since all expressions involving $\te_a$ are coordinate independent on $\filtint{1}$, the above discussion ensures the independence of $\alpha^{\filt{1}}$ from the choice of a specific set of coordinates.}

Finally, we can compute $\check{S}^{\filt{1}}$ such that  $\iota_{Q^{\filt{0}}} \iota_{Q^{\filt{0}}} \varpi^{\filt{0}} = 2 \check{\pi}^{\filt{0}*} \check{S}^{\filt{1}}$, using Equation \eqref{Q_GR-bulk1} for the bulk $Q$. A straightforward calculation yields: 

\begin{align*}
\check{S}^{\filt{1}}= \trintl{\filtint{1}}&  - \iota_\xi e F_{\omega} -e_n \xi^n F_\omega  - c d_\omega e + \frac12 [c,c]\omega^\dag  + c [e_n^\dag,e]\xi ^n+c [e^\dag,e_n]\xi ^n \\
&+ c d_\omega( \iota_\xi \omega^\dag + \omega^\dag_n \xi^n) -c d_\omega(\iota_\xi c^\dag_n \xi^n) + \iota_\xi F_\omega \iota_\xi c^\dag_n \xi^n - \frac12 \iota_\xi \iota_\xi F_\omega\omega^\dag \\ &-  \iota_\xi F_\omega\omega^\dag_n \xi^n - \iota_\xi F_\omega\iota_\xi\omega^\dag+ \frac12 \iota_{[\xi,\xi]}e e^\dag + e_n \xi^n d_{\omega} \iota_\xi e^\dag+ \iota_\xi e d_{\omega}(e^\dag_n \xi^n)  \\ &+ e^\dag_n \xi^n \iota_\xi d_{\omega} e + e_n \xi^n d_{\omega}(e^\dag_n \xi^n) - \frac12 \iota_{[\xi,\xi]}\chi\mathsf{V}\xi^n - \iota_\xi d \xi^n \chi_n\mathsf{V}\xi^n.
\end{align*}
Using the projection \eqref{Projectionbulktoboundary} we can find the codimension-$1$ action functional to be
\begin{align*}
S^{\filt{1}}= \trintl{\filtint{1}}& - \iota_{\txi{}} \te F_{\tom} -\epsilon_n \txi{n} F_{\tom} - \tc d_{\tom} \te + \frac12 [\tc,\tc]\tom^{\dag} +\frac12 \iota_{\txi{}} \iota_{\txi{}} F_{\tom}\tom^\dag+ \frac12 \iota_{[\txi{},\txi{}]}\te\te^\dag \nonumber \\ &+\tc d_{\tom}( \iota_{\txi{}} \tom^\dag)+ L_{\txi{}}^{\tom} (\epsilon_n \txi{n}) \te^\dag- [\tc, \epsilon_n \txi{n}]\te^\dag
\end{align*}
satisfying $\check{S}^{\filt{1}}= \pi^{\filt{1}*}_{\text{red}} S^{\filt{1}}$. As a direct consequence, then, we have
$$
\iota_{Q^{\filt{0}}}\iota_{Q^{\filt{0}}} \varpi^{\filt{0}} = 2\pi^{\filt{0}*}S^{\filt{1}}.
$$

Having found the codimension-$1$ action functional we can compute the cohomological vector field $Q^{\filt{1}}$ on the $1$-stratum. Since the coordinates we are using are not a Darboux chart, some complications in the computation arise. Nonetheless, the non-degeneracy of $\varpi^{\filt{1}}$ guarantees that starting from the variation of the action and using the equation $\iota_{Q^{\filt{1}}}\varpi^{\filt{1}}= \delta S^{\filt{1}}$, we can compute the cohomological codimension-$1$ vector field $Q^{\filt{1}}$.
\end{proof}

\begin{corollary}[to the proof of Proposition \ref{prop:BFVdata}]\label{cor:BFVdata-projection}
An explicit expression of the projection to codimension-$1$ fields, given a coordinate system adapted to the embedding $\filtint{1}\to\filtint{0}$ is given by
\begin{equation}\label{Projectionbulktoboundary}
{\pi}^{\filt{0}}_{GR}:
\begin{cases}
\te=  e - \omega^\dag_n\xi^n +\iota_\xi c^\dag_n\xi^n\\
\tom= \omega - e^\dag_n\xi^n\\
\tc=  c -  \iota_\xi e^\dag_n\xi^n \\
\txi{n} = (1+\varepsilon)^{-1}\xi^n\\
\txi{a} = \xi^a \\
\te^\dag = e^{\dag} - \chi_a\xi^n\mathsf{V}^{\comp{a}}  +\left(\omega_{na}^\dag\xi^n e^\dag \right)^{\comp{a}}- (e^\dag_n\xi^n\omega^\dag_a)^{\comp{a}} - \left(\iota_\xi c^\dag_{na}\xi^n e^\dag\right)^{\comp{a}} \\
\qquad - \chi_n\xi^n\mathsf{V}^{\comp{n}}  + \left(\omega_n^\dag\xi^n e^\dag_n \right)^{\comp{n}}- (e^\dag_n \iota_\xi c^\dag_n\xi^n)^{\comp{n}}\\
\tom^\dag=\omega^\dag\\
\epsilon_n= (1+\varepsilon)e_n
\end{cases}\hspace{-0.3cm};
\end{equation}
where $\mathsf{V} \in \Omega^3(\partial M, \wedge^3 \mathcal{V})$ is a fixed volume form, $\varepsilon \in C^\infty(M^{\filt{1}})$ is a smooth function, $\chi \in \Omega(M)$ is the one form component of $\xi^{\dag}= \chi \otimes \mathsf{V}$,  and we used the notation explained in \eqref{componenttopform},  \eqref{restriction} and \eqref{vectorboundary} for the restriction of field and their normal components where the superscript ${}^{\comp{n}}$ here denotes the component with respect to $e_n$.
\end{corollary}

\begin{remark}
Observe that, strictly speaking, Equation \eqref{rCME} is satisfied by the above data only if we modify $S_{GR}$ by the boundary term \emph{Tr}$\int_{\filtint{0}} d(e e_n^\dag \xi^n)$, so that the associated \emph{pre-boundary one-form} $\check{\alpha}^{\filt{1}}$ is automatically basic. Indeed, this is necessary only if we insist on the BV-BFV data to be \emph{exact}, i.e. such that the symplectic forms are exact at every codimension $\varpi^{\filt{k}}=\delta\alpha^{\filt{k}}$ \emph{and} that their symplectic potential $\alpha^{\filt{k}}$ is pulled back to the respective term in \eqref{rCME}. A picture suitable to situations like the present one, where symplectic reduction is possible, but nontrivial, at every codimension, is to consider the symplectic forms instead of their potentials, and the equation
$$
\mathcal{L}_Q \varpi^{\filt{k}} = \pi^{\filt{k}*}\varpi^{\filt{k+1}},
$$
which follows from \eqref{rCME} by differentiating w.r.t. $\delta$. However, modifying $S^{\filt{k}}$ by a term concentrated in codimension-$(k+1)$ does not change the BV-BFV structure (cf. \cite{MSW2019}).
\end{remark}

\begin{remark}
The expression \eqref{BoundarytwoformGR} is not in the Darboux form. The change of coordinates to a Darboux chart will turn out to coincide with the boundary symplectomorphism between GR and BF theory (see Section \ref{Sect:BCV}). The same symplectomorphism will also turn the local trivialisation \eqref{LoctrivF1} into a global one, showing that the bundle $\mathcal{F}^{\filt{1}}_{GR}$ is indeed trivial (since $\mathcal{F}^{\filt{1}}_{BF}$ is). 
\end{remark}

\subsection{2-extended GR theory }
We are now ready to compute the structure induced on codimension-$2$ strata when $M$ carries a $2$-stratification, for example in the presence of corners. Building up from the codimension-$1$ BV-BFV structure found in Proposition \ref{prop:BFVdata}, denoting again by $\Omega_{nd}^1(\filtint{2}, \mathcal{V})$ the space of maps $e\colon T\filtint{2} \longrightarrow \mathcal{V}$ whose image defines a linearly independent system, we have the following result.

\begin{proposition}
\label{prop:BFFVdata}
The BV theory $\mathfrak{F}^{\uparrow 0}_{GR}$ is 2-extendable to $\mathfrak{F}^{\uparrow 2}_{GR}$. The codimension-$2$ data are: 
\begin{itemize}
\item The space of codimension-$2$ fields, given by the bundle
\begin{equation}
\mathcal{F}^{\filt{2}}_{GR}\longrightarrow \Omega_{nd}^1(\filtint{2}, \mathcal{V}),
\end{equation}
with local trivialisation on a open subset $\mathcal{U}^{\filt{2}} \subset \Omega_{nd}^1(\filtint{2}, \mathcal{V})$
\begin{equation*}
\mathcal{F}^{\filt{2}}_{GR}=\mathcal{U}^{\filt{2}} \times\Omega^1( \filtint{2}, \wedge^2\mathcal{V})\oplus \Omega^0[1]( \filtint{2}, \wedge^2\mathcal{V})\oplus \mathfrak{X}[1](\filtint{2}) \oplus C^\infty[1](\filtint{2})^{\oplus 2},
\end{equation*}
and fields denoted by $\tte\in \mathcal{U}^{\filt{2}}$, $\ttom\in\Omega^1(\filtint{2},\wedge^2\mathcal{V})$ in degree zero, $\ttc\in\Omega^0[1](\filtint{2},\wedge^2\mathcal{V})$, $\ttxi{}\in\mathfrak{X}[1](\filtint{2})$ and $\ttxi{m},\ttxi{n}\in C^\infty[1](\filtint{2})$ in degree one,
together with two linearly independent, fixed vector fields $\epsilon_m, \epsilon_n \in \Gamma(\mathcal{V})$, completing the image of elements $\tte\in \mathcal{U}^{\filt{2}} \subset \Omega_{nd}^1(\filtint{2}, \mathcal{V})$ to a basis of  $\mathcal{V}$;

\item The codimension-$2$ one-form, symplectic form and action functional
\begin{subequations} \begin{align}
\alpha^{\filt{2}}_{GR}& = \trintl{\filtint{2}}  \ttc \delta \tte+\iota_{\ttxi{}} \tte \delta \ttom + \epsilon_m \ttxi{m}\delta \ttom +\epsilon_n \ttxi{n}\delta \ttom,\\
\varpi^{\filt{2}}_{GR}& = \trintl{\filtint{2}} \delta \tte \delta \ttc +\iota_{\delta \ttxi{}} \tte \delta \ttom+\iota_{\ttxi{}} \delta \tte \delta \ttom + \epsilon_m \delta \ttxi{m}\delta \ttom +\epsilon_n \delta \ttxi{n}\delta \ttom,\\
S^{\filt{2}}_{GR} & = \trintl{\filtint{2}} -\frac12 [\ttc,\ttc] \tte - \iota_{\ttxi{}} \tte d_{\ttom} \ttc- \epsilon_m \ttxi{m} d_{\ttom} \ttc - \epsilon_n \ttxi{n} d_{\ttom} \ttc,
\end{align}\end{subequations}
\item The cohomological vector field $Q^{\filt{2}}_{GR}$
\begin{subequations}\label{Q_GR-corner1}
\begin{align}
 Q^{\filt{2}}_{GR} {\tte} &=- d_{\ttom} (e_a \ttxi{a})-d_{\ttom} (\epsilon_m \ttxi{m})-d_{\ttom} (\epsilon_n \ttxi{n})-[\ttc,\tte]\\
 Q^{\filt{2}}_{GR} {\ttom} &= -d_{\ttom} \ttc \\
 Q^{\filt{2}}_{GR} {\ttc} &= -\frac{1}{2}[\ttc, \ttc]\\
 Q^{\filt{2}}_{GR} {\ttxi{}} &=  \frac{1}{2}[\txi{}, \txi{}] + X_n^{\comp{a}} -Y_n^{\comp{a}}+X_m^{\comp{a}} -Y_m^{\comp{a}}\\
 Q^{\filt{2}}_{GR} {\ttxi{n}} &= X_n^{\comp{n}} -Y_n^{\comp{n}}+X_m^{\comp{n}} -Y_m^{\comp{n}}\\
 Q^{\filt{2}}_{GR} {\ttxi{m}} &= X_n^{\comp{m}} -Y_n^{\comp{m}}+X_m^{\comp{m}} -Y_m^{\comp{m}} 
\end{align}
\end{subequations}
where 
$$X_n= L_{\ttxi{}}^{\ttom}(\epsilon_n \ttxi{n}), \qquad X_m= L_{\ttxi{}}^{\ttom}(\epsilon_m \ttxi{m}),  \qquad Y_n= [\ttc, \epsilon_n \ttxi{n}], \qquad Y_m= [\ttc, \epsilon_m \ttxi{m}],$$ 
and the superscripts ${}^{\comp{n}},{}^{\comp{m}}$ denote components with respect to $\epsilon_n,\epsilon_m$  as defined in Section \ref{sec:Notation};
\item The projection to codimension-$2$ fields ${\pi}^{\filt{1}}_{GR} = {\pi_{\mathrm{red}}}^{\filt{2}} \circ {\check{\pi}}^{\filt{1}}$ is smooth, where ${\check{\pi}}^{\filt{1}}: \mathcal{F}^{\filt{1}}\to \check{\mathcal{F}}^{\filt{2}}$ is the restriction of the codimension-$1$ fields to the $2$-stratum $M^{\filt{2}}$ and ${\pi_{\mathrm{red}}}^{\filt{2}}: \check{\mathcal{F}}^{\filt{2}}\to \mathcal{F}^{\filt{2}}$ is the symplectic reduction \eqref{symplectic-reduction}.
\end{itemize}
\end{proposition}

\begin{proof}
We proceed as before, following the strategy outlined in Section \ref{sec:Notation}. Let $\check{\alpha}^{\filt{2}}$ be such that $\iota_{Q^{\filt{1}}} \varpi^{\filt{1}}= \delta S^{\filt{1}}+ \check{\pi}^{\filt{1}}\check{\alpha}^{\filt{2}}$:
\begin{align}\label{check alpha boundary}
\check{\alpha}^{\filt{2}}= \trintl{\filtint{2}}& \iota_{\txi{}} \te \delta \tom+\te_m \txi{m} \delta \tom+ \epsilon_n \txi{n} \delta \tom+ \tc\delta \te - \delta \tom \iota_{\txi{}} \tom^\dag_m \txi{m}  +  \iota_{\delta\txi{}}\te \te^\dag_m \txi{m}\\
&-\te_m \delta \txi{m} \te^\dag_m \txi{m}-\tc \delta (\tom_m^\dag \txi{m})- \epsilon_n \delta \txi{n} \te^{\dag}_m \txi{m} \nonumber
\end{align}
\begin{align*}
\check{\varpi}^{\filt{2}}= \delta  \check{\alpha}^{\filt{2}}= \trintl{\filtint{2}} &\iota_{\delta \txi{}} \te \delta \tom+\iota_{\txi{}} \delta\te \delta \tom+\delta \te_m \txi{m} \delta \tom-\te_m \delta \txi{m} \delta \tom- \epsilon_n \delta \txi{n} \delta \tom+ \delta \tc\delta \te\\
&-\delta \tom \iota_{\delta\txi{}} \tom^\dag_m \txi{m}- \delta \tom \iota_{\txi{}} \delta \tom^\dag_m \txi{m} +\delta \tom \iota_{\txi{}} \tom^\dag_m \delta \txi{m} -\delta \te_m \delta \txi{m} \te^\dag_m \txi{m} \\ 
&+\te_m \delta \txi{m} \delta \te^\dag_m \txi{m}+\te_m \delta \txi{m} \te^\dag_m \delta \txi{m}- \iota_{\delta\txi{}}\delta \te \te^\dag_m \txi{m}-  \iota_{\delta\txi{}}\te \delta \te^\dag_m \txi{m}\\&
-   \iota_{\delta\txi{}}\te \te^\dag_m \delta \txi{m}-\delta \tc \delta (\tom_m^\dag \txi{m})+\epsilon_n \delta \txi{n} \delta \te^{\dag}_m \txi{m}+ \epsilon_n \delta \txi{n} \te^{\dag}_m \delta \txi{m}
\end{align*}
The forms $\check{\alpha}^{\filt{2}},\check{\varpi}^{\filt{2}}$ are defined on $\check{\mathcal{F}}^{\filt{2}}$, the space of restrictions of codimension-$1$ fields (and their normal jets) to the codimension-$2$ stratum, with $\check{\pi}^{\filt{1}}\colon \mathcal{F}^{\filt{1}} \longrightarrow \check{\mathcal{F}}^{\filt{2}}$.

We have to show that the symplectic reduction of $\check{\varpi}^{\filt{2}}$ is possible. The equations that define the kernel of $(\check{\varpi}^{\filt{2}})^\sharp$ are:
\begin{subequations}\begin{align}
\label{kernel_precorner_i} \delta \tc: & \qquad X_{\te}+(X_{ \tom^\dag_m}) \txi{m}-  \tom^\dag_m( X_{\txi{m}})=0\\
\label{kernel_precorner_a} \delta \tom: & \qquad \iota_{(X_{ \txi{}})} \te+\iota_{\txi{}} X_{\te}+X_{ \te_m} \txi{m}- \te_m (X_{\txi{m}})- \epsilon_n (X_{\txi{n}})\\ & \qquad \qquad+\iota_{(X_{\txi{}})} \tom^\dag_m \txi{m} +\iota_{\txi{}}(X_{ \tom^\dag_m}) \txi{m}-\iota_{\txi{}} ( \tom^\dag_m) X_{\txi{m}}=0 \nonumber\\
\label{kernel_precorner_d} \delta \txi{}: & \qquad -\te_\bullet p^\dag_p(X_{\tom})-(X_{\tom})\tom^\dag_{m \bullet} \txi{m}+ \te_\bullet (X_{\te^\dag_m}) \txi{m} + \te_\bullet \te^\dag_m (X_{\txi{m}})  =0\\
\label{kernel_precorner_g} \delta \txi{n} : & \qquad - \epsilon_n p^\dag_n(X_{\tom})+ \epsilon_n {\te^{\dag}}_m(X_{\txi{m}})+ \epsilon_n(X_{{\te^{\dag}}_m})\txi{m}=0\\
\label{kernel_precorner_h} \delta \txi{m}: & \qquad - \te_m p^\dag_m(X_{\tom})+(X_{\tom})\iota_{\txi{}}\tom^\dag_m - \iota_{(X_{\txi{}})} \te \te_m	^\dag -2e_m{\te^{\dag}}_m(X_{\txi{m}})\\ & \qquad -  (X_{\te_m}){\te^\dag}_m\txi{m}+\te_m (X_{{\te^{\dag}}_m})\txi{m}+(X_{\tc}) \tom^\dag_m +\epsilon_n (X_{\txi{n}}) \te^\dag_m=0 \nonumber \\
\label{kernel_precorner_f} \delta \te_m : & \qquad \txi{m} (X_{\tom})+ (X_{\txi{m}}) \te^\dag_m \txi{m} =0\\
\label{kernel_precorner_e} \delta\te: & \qquad \iota_{\txi{}} (X_{\tom}) + (X_{\tc})=0\\
\label{kernel_precorner_l} \delta \tom_m^\dag : & \qquad - \iota_{\txi{}} (X_{\tom}) \txi{m}-(X_{\tc})\txi{m} =0\\ 
\label{kernel_precorner_m} \delta \te^\dag_m: & \qquad \iota_{(X_{ \txi{}})} \te \txi{m}- \te_m (X_{\txi{m}}) \txi{m}- \epsilon_n (X_{\txi{n}}) \txi{m}=0
\end{align}\end{subequations}

From \eqref{kernel_precorner_i} we get 
$$X_{\te}=-(X_{ \tom^\dag_m}) \txi{m}+  \tom^\dag_m (X_{\txi{m}}).$$
Inserting this result into \eqref{kernel_precorner_a} we get
\begin{align*}
\iota_{(X_{ \txi{}})} \te+X_{ \te_m} \txi{m}- \te_m (X_{\txi{m}})- \epsilon_n (X_{\txi{n}})+\iota_{(X_{\txi{}})} \tom^\dag_m \txi{m}=0
\end{align*}
which is a vector equation. {
Using the basis $\{\te, \te_m, \epsilon_n\}$ we can write it equivalently as 
\begin{align*}
&\te^a X_{ \txi{a}}
+(X_{e_m}^{\comp{a}})\te_a \txi{m} +(X_{e_m}^{\comp{m}})\te_m \txi{m} +(X_{e_m}^{\comp{n}})\epsilon_n \txi{m}
- \te_m (X_{\txi{m}})
- \epsilon_n (X_{\txi{n}})\\
&+ \tom^{\dag\comp{a}}_{mb} \te_a X_{\txi{b}}\txi{m}
+ \tom^{\dag\comp{m}}_{mb} \te_m X_{\txi{b}}\txi{m}
+ \tom^{\dag\comp{n}}_{mb} \epsilon_n X_{\txi{b}}\txi{m}
=0.
\end{align*}
In matrix notation we get
\begin{equation}\label{matrixequationcorner}
\left(\begin{array}{ccc}
1 -  \tom^{\dag\comp{a}}_{mb} \txi{m} & 0 & 0 \\
-\tom^{\dag\comp{m}}_{mb} \txi{m} & 1 &  0\\
-\tom^{\dag\comp{n}}_{mb} \txi{m} & 0 & 1
\end{array}\right)
\left(\begin{array}{c}
(X_{\txi{b}})\\
(X_{\txi{m}})\\
(X_{\txi{n}})
\end{array}\right)=
\left(\begin{array}{c}
(X_{e_m}^{\comp{a}})\\
(X_{e_m}^{\comp{m}})\\
(X_{e_m}^{\comp{n}})
\end{array}\right)\txi{m}.
\end{equation}
}
We can write equation \eqref{matrixequationcorner} as $(1+Y)x = b$ where $Y$ is a nilpotent matrix of index 2. Hence the equation can be inverted by $ x = (1 - Y + Y^2) b$ and we get:
\begin{equation}
\left(\begin{array}{c}
(X_{\txi{a}})\\
(X_{\txi{m}})\\
(X_{\txi{n}})
\end{array}\right)=\left(\begin{array}{ccc}
1 +  \tom^{\dag\comp{a}}_{ma} \txi{m} & 0 & 0 \\
+\tom^{\dag\comp{m}}_{ma} \txi{m} & 1 &  0\\
+\tom^{\dag\comp{n}}_{ma} \txi{m} & 0 & 1
\end{array}\right)
\left(\begin{array}{c}
(X_{e_m}^{\comp{a}})\\
(X_{e_m}^{\comp{m}})\\
(X_{e_m}^{\comp{n}})
\end{array}\right)\txi{m}.
\end{equation}
Since $(\txi{m})^2=0$ we get 
$$(X_{\txi{\mu}})=(X_{e_m}^{\comp{\mu}}) \txi{m} \qquad \mu= a,m,n,$$
and, for the same reason, \eqref{kernel_precorner_m} is satisfied.
From \eqref{kernel_precorner_g} we get
\begin{equation}\label{pnX_tom}
 p^\dag_n(X_{\tom})=  p^\dag_n {\te^{\dag}}_m(X_{e_m}^{\comp{m}}) \txi{m}+ p^\dag_n(X_{{\te^{\dag}}_m})\txi{m}.
\end{equation}
Equation \eqref{kernel_precorner_e} is easily solved as 
$$ (X_{\tc})= -\iota_{\txi{}} (X_{\tom})$$
which in turn solves also \eqref{kernel_precorner_l}.
Using equation \eqref{kernel_precorner_e} we can solve \eqref{kernel_precorner_h}:
\begin{align}\label{pmX_tom}
 p^\dag_m(X_{\tom})= p^\dag_m{\te^{\dag}}_m (X_{e_m}^{\comp{m}}) \txi{m} + p^\dag_m(X_{{\te^{\dag}}_m})\txi{m}.
\end{align}
Turning to 	\eqref{kernel_precorner_d} we get
$$p^\dag_a(X_{\tom})(1+ \tom^{\dag\comp{a}}_{m} \txi{m})=  p^\dag_a (X_{\te^\dag_m}) \txi{m} + p^\dag_a \te^\dag_m (X_{e_m}^{\comp{m}}) \txi{m} ,$$
hence
\begin{equation}\label{paX_tom}
p^\dag_a(X_{\tom})=  p^\dag_a (X_{\te^\dag_m}) \txi{m} + p^\dag_a \te^\dag_m (X_{e_m}^{\comp{m}}) \txi{m} .
\end{equation}
Collecting \eqref{pnX_tom}, \eqref{pmX_tom} and \eqref{paX_tom}, we get that $(X_{\tom})$ is proportional to $\txi{m}$, hence also \eqref{kernel_precorner_f} is solved.
This shows that it is possible to perform symplectic reduction.

Collecting all remaining nontrivial equations  together we obtain
\begin{align*}
(X_{\txi{\mu}})&=(X_{e_m}^{\comp{\mu}}) \txi{m} \qquad \mu= a,m,n \\
(X_{\tom})&= (X_{\te^\dag_m}) \txi{m} +  \te^\dag_m (X_{e_m}^{\comp{m}}) \txi{m}\\
(X_{\tc})&= \iota_{\txi{}} (X_{\te^\dag_m}) \txi{m} + \iota_{\txi{}} \te^\dag_m (X_{e_m}^{\comp{m}}) \txi{m}\\
 (X_{\te})&=-(X_{ \tom^\dag_m}) \txi{m}+  \tom^\dag_m(X_{e_m}^{\comp{m}}) \txi{m}.
\end{align*}
The vertical vector fields are
\begin{align*}
\mathbb{E}_m =& \X{e_m}\pard{}{\te_m} + \te^\dag_m (X_{e_m}^{\comp{m}}) \txi{m}\pard{}{\tom} + \iota_{\txi{}} \te^\dag_m (X_{e_m}^{\comp{m}}) \txi{m}\pard{}{\tc}\\\notag
 +& \tom^\dag_m(X_{e_m}^{\comp{m}}) \txi{m}\pard{}{e} - \X{e_m}^\mu\txi{m}\pard{}{\txi{\mu}}\\
 \mathbb{E}^\dag_m =& \X{\te^\dag_m}\pard{}{\te^\dag_m} +(X_{\te^\dag_m}) \txi{m} \pard{}{\tom} + \iota_{\txi{}}\X{\te^\dag_m}\txi{m} \pard{}{\tc} \\
\mathbb{\Omega}^\dag_n =& \X{\tom^\dag_m} \pard{}{\tom^\dag_m} - \X{\tom^\dag_m}\txi{m} \pard{}{\te}.
\end{align*}
With an analogous procedure to Proposition \ref{prop:BFVdata} we flow along these vertical vector fields to obtain the new corner variables.
Using $\mathbb{\Omega}^\dag_m$ we have that $\tom_m^\dag(s=1)=0\iff \X{\tom^\dag_m}=-\tom^\dag_m$ which leads to
$$\te{[1]} = \te - \tom^\dag_m\txi{m}.$$
Analogously, with $\mathbb{E}^\dag_m$ we get
\begin{eqnarray}
\tom{[1]}=\tom -\te^\dag_m\txi{m}; & \tc{[1]}=\tc - \iota_{\txi{}} \te^\dag_m\txi{m}.\nonumber
\end{eqnarray}	
Now we have to consider 
$$\mathbb{E}_m = \X{e_m}\pard{}{\te_m} - \X{e_m}^\mu\txi{m}\pard{}{\txi{\mu}}.$$
As before, this sets the vector $\te_m$ to a constant $\epsilon_m= (1 + \varepsilon ') \te_m$ { for $\varepsilon ' \in C^\infty(\filtint{1})$, $\varepsilon '>0$,} and transforms $\txi{m}[1]= (1+\varepsilon ')^{-1}\txi{m}$, while leaving $\txi{a}$ and $\txi{n}$ unchanged. Once again, fixing a linearly independent vector $\epsilon_m$ can be done in an open subset $\mathcal{U}^{\filt{2}}\subset \Omega_{nd}^1(\filtint{2}, \mathcal{V})$ independently of $\te$. Summarizing, if we denote the space of codimension-$2$ fields by
$$
\mathcal{F}^{\filt{2}} \coloneqq \qsp{\check{\mathcal{F}}^{\filt{2}}}{\mathrm{ker}(\check{\varpi}^{\filt{2}}{}^\sharp)}
$$
we then obtain the following projection $\pi_{\mathrm{red}}^{\filt{2}}:  \check{\mathcal{F}}^{\filt{2}}\rightarrow  \mathcal{F}^{\filt{2}}$:
\begin{equation}
\pi_{\mathrm{red}}^{\filt{2}}:
\begin{cases}
\tte := \te-\tom^\dag_m \txi{m}\\
\ttom:= \tom -\te^\dag_m \txi{m}\\
\ttc:= \tc - \iota_{\txi{}} \te^\dag_m \txi{m} \\
\ttxi{a}:= \txi{a} \\
\ttxi{n}:= \txi{n} \\
\ttxi{m}:= (1+\varepsilon ')^{-1}\txi{m}
\end{cases}
\end{equation}
together with $\epsilon_m= (1 + \varepsilon ') \te_m$, and once more we define the BV-BFV map to be ${\pi}^{\filt{1}}\coloneqq \pi_{\mathrm{red}}^{\filt{2}}\circ \check{\pi}^{\filt{1}}$.

The one-form $\check{\alpha}^{\filt{2}}$ is not basic w.r.t.  $\pi_{\mathrm{red}}^{\filt{2}}$, and it descends to the quotient only upon adding the exact term $\delta ( \iota_{\txi{}}\te \te^\dag_m \txi{m} + \epsilon_n \txi{n}\te^\dag_m \txi{m})$. The codimension-$2$ one-form is then given by 
$$\alpha^{\filt{2}}= \trintl{\filtint{2}}  \ttc \delta \tte+\iota_{\ttxi{}} \tte \delta \ttom + \epsilon_m \ttxi{m}\delta \ttom +\epsilon_n \ttxi{n}\delta \ttom.$$

 From the defining formula $ \iota_{Q^{\filt{1}}}\iota_{Q^{\filt{1}}} \varpi^{\filt{1}} = 2 \check{\pi}^{\filt{1}*} \check{S}^{\filt{2}}$ and the expression for $Q^{\filt{1}}$ \eqref{Q_GR-boundary1}, we compute the \textit{pre-codimension-$2$} action functional $\check{S}^{\filt{2}}$. 
\begin{align*}
\check{S}^{\filt{2}}=\trintl{\filtint{2}}& \te_m \txi{m} d_{\tom} c +\epsilon_n \txi{n} d_{\tom} c+ \iota_{\txi{}} \te d_{\tom} \tc -\frac{1}{2}[\tc,\tc]\te  +\iota_{\txi{}} \tom^\dag_m \txi{m}   d_{\tom} \tc
 \\&+ \frac{1}{2}[\tc,\tc]\tom^\dag_m \txi{m}+[\tc, \epsilon_n 	\txi{n}] \te_m^\dag \txi{m}- \iota_{\txi{}} d_{\tom} (\epsilon_n \txi{n})  \te_m^\dag \txi{m}
 \\&-\iota_{\txi{}}d \txi{m}\te_m\te_m^\dag \txi{m}-\iota_{\txi{}}d \txi{a}\te_a\te_m^\dag \txi{m}.
\end{align*}
It is easy to show that $\check{S}^{\filt{2}}$ is basic, and, defining
\begin{align*}
S^{\filt{2}}= \trintl{\filtint{2}} -\frac12 [\ttc,\ttc] \tte + \iota_{\ttxi{}} \tte d_{\ttom} \ttc+ \epsilon_m \ttxi{m} d_{\ttom} \ttc+ \epsilon_n \ttxi{n} d_{\ttom} \ttc
\end{align*}
 we obtain that $\pi_{\mathrm{red}}^{\filt{2}*}S^{\filt{2}}=\check{S}^{\filt{2}}$ and
$$
\iota_{Q^{\filt{1}}}\iota_{Q^{\filt{1}}}\varpi^{\filt{1}} = 2 \pi^{\filt{1} *}S^{\filt{2}}.
$$
\end{proof}

\begin{corollary}[to the proof of Proposition \ref{prop:BFFVdata}]\label{cor:BFFVdata-projection}
The projection to codimension-$2$ fields, given a coordinate system adapted to the embedding $\filtint{2}\to\filtint{1}$ (see Remark \ref{rem:coordinatedependingquantities}): 
\begin{equation}
\pi^{\filt{1}}_{GR}:
\begin{cases}
\tte := \te-\tom^\dag_m \txi{m}\\
\ttom:= \tom -\te^\dag_m \txi{m}\\
\ttc:= \tc - \iota_{\txi{}} \te^\dag_m \txi{m} \\
\ttxi{a}:= \txi{a} \\
\ttxi{n}:= \txi{n} \\
\ttxi{m}:= (1+\varepsilon ')^{-1}\txi{m} \\
\epsilon_m: = (1 + \varepsilon ') \te_m
\end{cases};
\end{equation}
where $\varepsilon '\in C^\infty(M^{\filt{2}})$ and we used the notation explained in \eqref{restriction} and \eqref{vectorboundary} for the restriction of field and their tranversal components, adapted as described in  Remark \ref{rem:notationhighercodimension}.
\end{corollary}

\subsection{3-extended GR theory }
Finally, we allow $M$ to bear a $3$-stratification, i.e. we consider $\{M^{\filt{k}}\}_{k=0\dots 3}$. We iterate once again the BV-BFV procedure from the previously obtained, codimension-$2$ data.  
\begin{proposition}
\label{prop:BFFFVdata}
The BV theory $\mathfrak{F}^{\uparrow 0}_{GR}$ is 3-extendable to $\mathfrak{F}^{\uparrow 3}_{GR}$. The codimension-$3$ data are: 
\begin{itemize}
\item The space of fields
\begin{equation*}
\mathcal{F}^{\filt{3}}_{GR}= \Omega^0[1](M^{\filt{3}},\wedge^2 \mathcal{V}) \oplus C^\infty[1](M^{\filt{3}})^{\oplus 3}
\end{equation*}
together with a basis of $\mathcal{V}$ denoted by $\{\epsilon_m, \epsilon_n, \epsilon_a\}$;

\item The codimension-$3$ one-form, symplectic form and action functional
\begin{align}
\alpha^{\filt{3}}_{GR}& = \trintl{\filtint{3}} -\epsilon_n \widetilde{\ttxi{n}} \delta \widetilde{\ttc} -\epsilon_m \widetilde{\ttxi{m}} \delta \widetilde{\ttc} -\epsilon_a \widetilde{\ttxi{a}} \delta \widetilde{\ttc},\\
\varpi^{\filt{3}}_{GR} & = \trintl{\filtint{3}} -\epsilon_n \delta\widetilde{\ttxi{n}} \delta \widetilde{\ttc} -\epsilon_m \delta\widetilde{\ttxi{m}} \delta \widetilde{\ttc} -\epsilon_a \delta\widetilde{\ttxi{a}} \delta \widetilde{\ttc},\\
S^{\filt{3}}_{GR} &= \trintl{\filtint{3}} \frac{1}{2}[\widetilde{\ttc},\widetilde{\ttc}]\epsilon_a \widetilde{\ttxi{a}}+\frac{1}{2}[\widetilde{\ttc},\widetilde{\ttc}]\epsilon_m \widetilde{\ttxi{m}}+\frac{1}{2}[\widetilde{\ttc},\widetilde{\ttc}]\epsilon_n \widetilde{\ttxi{n}};
\end{align}
\item The cohomological vector field $Q^{\filt{3}}_{GR}$
\begin{subequations}\label{Q_GR-vertex1}
\begin{align}
 Q^{\filt{3}}_{GR}\, {\widetilde{\ttc}} &= \frac{1}{2}[\widetilde{\ttc},\widetilde{\ttc}]\\
 Q^{\filt{3}}_{GR}\, {\widetilde{\ttxi{\mu}}} &= [\widetilde{\ttc}, \epsilon_a \widetilde{\ttxi{a}}]^{\comp{\mu}}+[\widetilde{\ttc}, \epsilon_m \widetilde{\ttxi{m}}]^{\comp{\mu}}+[\widetilde{\ttc}, \epsilon_n \widetilde{\ttxi{n}}]^{\comp{\mu}};
\end{align}
\end{subequations}
\item The projection to codimension-$3$ fields ${\pi}^{\filt{2}}_{GR} = \pi_{\mathrm{red}}^{\filt{3}} \circ {\check{\pi}}^{\filt{2}}$ is smooth, where $\check{\pi}^{\filt{2}}: \mathcal{F}^{\filt{2}}\to \check{\mathcal{F}}^{\filt{3}}$ is the restriction of the codimension-$2$ fields to the $3$-stratum $M^{\filt{3}}$ and $\pi_{\mathrm{red}}^{\filt{3}}: \check{\mathcal{F}}^{\filt{3}}\to \mathcal{F}^{\filt{3}}$ is the symplectic reduction \eqref{symplectic-reduction}.
\end{itemize}
\end{proposition}

\begin{remark}
Since $M^{\filt{3}}$ is a set of points, integration is here intended as sum over such points, with the space of fields on a single point being given by
\begin{equation*}
\wedge^2 V [1] \times \mathbb{R}^3 [1].
\end{equation*}
\end{remark}

\begin{proof}

In order to keep the notation light and readable we drop the tildes.  Once again, from the defining equation $\iota_{Q^{\filt{2}}} \varpi^{\filt{2}}= \delta S^{\filt{2}}+ \check{\pi}^{\filt{2}}\check{\alpha}^{\filt{3}}$ we get
$$\check{\alpha}^{\filt{3}}= \trintl{\filtint{3}} -\epsilon_n \xi^n \delta c -\epsilon_m \xi^m \delta c -e_a \xi^a \delta c $$
and
$$\check{\varpi}^{\filt{3}}= \delta \check{\alpha}^{\filt{3}}= \trintl{\filtint{3}} \epsilon_n \delta\xi^n \delta c +\epsilon_m \delta \xi^m \delta c -\delta e_a \xi^a \delta c +e_a \delta  \xi^a \delta c,$$
with $\check{\alpha}^{\filt{3}} \in \Omega^1(\check{\mathcal{F}}^{\filt{3}})$, the space of restrictions of fields in $\mathcal{F}^{\filt{2}}$ (and their normal jets) to the stratum $M^{\filt{3}}$, with $\check{\pi}^{\filt{2}}\colon \mathcal{F}^{\filt{2}} \longrightarrow \check{\mathcal{F}}^{\filt{3}}$.
There is only one nontrivial equation defining the kernel of $(\check{\varpi}^{\filt{3}})^\sharp$:
 $$ \epsilon_n X_{\xi^n} +\epsilon_m X_{\xi^m} +e_a X_{\xi^a} - X_{e_a}\xi^a=0.$$
This is solved as in the previous cases by
$$ X_{\xi}^{\mu}= (X_{e_a})^{\comp{\mu}} \xi^a$$
and, since this defines a subbundle of $T\check{\mathcal{F}}^{\filt{3}}$, it is possible to perform the symplectic reduction. {The kernel of $\check{\varpi}^{\filt{3}}$ is spanned by}
\begin{align*}
\mathbb{E}_a =& \X{e_a}\pard{}{e_a} + \X{e_a}^{\comp{\mu}}\xi^{a}\pard{}{\txi{\mu}},
\end{align*}
and ance again, flowing along $\mathbb{E}_a$, we are able to fix the vector $e_a$ to $\epsilon_a = (1+ \varepsilon'') e_a$ for some $\varepsilon'' >0$ {(a function on a finite set of points)} and consequently, $\xi^{a}[1] = (1+ \varepsilon'') \xi^a$, while the rest is left unchanged. This defines the symplectic reduction $\pi_{\mathrm{red}}^{\filt{3}} \colon \check{\mathcal{F}}^{\filt{3}}\longrightarrow \mathcal{F}^{\filt{3}}$, and the BV-BFV map $\pi^{\filt{2}}\coloneqq \pi_{\mathrm{red}}^{\filt{3}}\circ \check{\pi}^{\filt{2}}$. Then, the expression 
$$\alpha^{\filt{3}}= \trintl{\filtint{3}} -\epsilon_n \xi^n \delta c -\epsilon_m \xi^m \delta c -\epsilon_a \xi^a \delta c $$
is such that $\pi_{\mathrm{red}}^{\filt{3}*} \alpha^{\filt{3}}= \check{\alpha}^{\filt{3}}$.
Lastly, we compute the vertex action. With a calculation completely analogous to what was done in Propositions \ref{prop:BFVdata} and \ref{prop:BFFVdata} we compute $\check{S}^{\filt{3}}$ such that $\iota_{Q^{\filt{2}}}\iota_{Q^{\filt{2}}}\varpi^{\filt{2}}= 2 \check{\pi}^{\filt{2}*}\check{S}^{\filt{3}}$:
$$ \check{S}^{\filt{3}}= \trintl{\filtint{3}} \frac{1}{2}[c,c]e_a \xi^a+\frac{1}{2}[c,c]\epsilon_m \xi^m+\frac{1}{2}[c,c]\epsilon_n \xi^n$$
Then, we get that the vertex action 
$$ S^{\filt{3}}= \trintl{\filtint{3}} \frac{1}{2}[c,c]\epsilon_a \xi^a+\frac{1}{2}[c,c]\epsilon_m \xi^m+\frac{1}{2}[c,c]\epsilon_n \xi^n$$
satisfies $\check{S}^{\filt{3}}=\pi_{\mathrm{red}}^{\filt{3}*} S^{\filt{3}}$, and consequently
$$
\iota_{Q^{\filt{2}}}\iota_{Q^{\filt{2}}}\varpi^{\filt{2}}= 2 \pi^{\filt{2}*}S^{\filt{3}}.
$$
\end{proof}

\begin{corollary}[to the proof of Proposition \ref{prop:BFFFVdata}]\label{cor:BFFFVdata-projection}
The projection to codimension-$3$ fields, given a coordinate system adapted to the embedding $\filtint{3}\to\filtint{2}$ (see Remark \ref{rem:coordinatedependingquantities}):
\begin{equation}
\pi^{\filt{2}}_{GR}:
\begin{cases}
\widetilde{\ttc}:= \ttc \\
\widetilde{\ttxi{a}}:= (1+\varepsilon '')^{-1}\ttxi{a} \\
\widetilde{\ttxi{n}}:= \ttxi{n} \\
\widetilde{\ttxi{m}}:= \ttxi{m} \\
\epsilon_a:= (1 + \varepsilon '') \tte_a
\end{cases}
\end{equation}
where $\varepsilon ''\in \Gamma (M^{\filt{3}})$ and we used the notation explained in \eqref{restriction} and \eqref{vectorboundary} for the restriction of field and their normal components adapted as described in  Remark \ref{rem:notationhighercodimension}.
\end{corollary}

\section{BV-BFV equivalence} \label{sec:BF-GR_equivalence}
The goal of this section is to prove { Theorem \ref{thm:extended-equivalence}, based on Definition \ref{def:n-ext-strongBVeq}. We recall it here:}
\begin{theorem}
The fully extended BV-BFV theories $\mathfrak{F}^{\uparrow 3}_{GR}$ and $\mathfrak{F}^{\uparrow 3}_{\BFnd}$ are strongly equivalent.
\end{theorem}

\begin{remark}
Explicitly, we have to prove the existence of invertible symplectomorphisms $\psi^{\filt{k}}$ that make the following diagram commute.
\begin{center}
\begin{equation}\label{Commdiag}
\begin{tikzcd}[ row sep= 3 em, column sep= 4 em]
\mathcal{F}^{\filt{0}}_{GR} \arrow[r, "\pi^{\filt{0}}_{GR}"]\arrow[d, "\psi^{\filt{0}}"]& \mathcal{F}^{\filt{1}}_{GR} \arrow[r, "\pi^{\filt{1}}_{GR}"]\arrow[d, "\psi^{\filt{1}}"]& \mathcal{F}^{\filt{2}}_{GR}\arrow[r, "\pi^{\filt{2}}_{GR}"]\arrow[d, "\psi^{\filt{2}}"] & \mathcal{F}^{\filt{3}}_{GR}\arrow[d, "\psi^{\filt{3}}"] \\
\mathcal{F}^{\filt{0}}_{\BFnd} \arrow[r, "\pi^{\filt{0}}_{\BFnd}"] & 
\mathcal{F}^{\filt{1}}_{\BFnd} \arrow[r, "\pi^{\filt{1}}_{\BFnd}"] & 
\mathcal{F}^{\filt{2}}_{\BFnd}\arrow[r, "\pi^{\filt{2}}_{\BFnd}"] & 
\mathcal{F}^{\filt{3}}_{\BFnd} \\
\end{tikzcd}
\end{equation}
\end{center}
Note that the vertical symplectomorphisms preserve the action functionals, i.e. they satisfy $(\psi^{\filt{k}})^* S_{\BFnd}^{\filt{k}} = S_{GR}^{\filt{k}}$, and that the horizontal arrows on both lines have been already described in Section \ref{sec:BF} and \ref{sec:BVBFVGRproof} respectively. { The symplectomorphisms $\psi^{\filt{k}}$ are non-canonical, as they depend on the the choice of a basis in $\mathcal{V}$.}
\end{remark}

\subsection{Equivalence on the bulk}
A strong equivalence (see Definition \ref{def:strongBVeq}) between the BV data associated to BF theory and GR was proven in \cite[Theorem 10]{CSS2017}, provided that on $B$ is imposed a non degeneracy condition. We denote by $\BFnd$ nondegenerate BF theory (see Definition \ref{def:classBF}).  An explicit generating function for the canonical transformation between the two $(-1)$-symplectic spaces of fields has been given as well. We recall here the most important steps of the construction. 

Using the notation introduced in Subsections \ref{sec:BF} and \ref{sec:GR} the generating function\footnote{Some signs differ from the formula given in \cite[Theorem 10]{CSS2017}  because we are using a different convention for the signs in \eqref{GR-bulk-action}.} reads
\begin{equation} \label{Generatingfunction}
H = - B^\dag \left( e - \iota_\xi \omega^\dag + \frac{1}{2}\iota^2_\xi c^\dag \right) - \tau^\dag \left( -\iota_\xi e + \frac{1}{2} \iota^2_\xi \omega^\dag- \frac{1}{3}\iota^3_\xi c^\dag  \right) - A \omega^\dag + \chi c^\dag .
\end{equation}

Starting from this generating function,  we recover an explicit expression of the transformation $\psi^{\filt{0}} : \mathcal{F}_{GR}^{\filt{0}} \rightarrow \mathcal{F}_{\BFnd}^{\filt{0}}$. It can be found using the standard rules
\begin{equation} \label{Hamiltonequations}
p = - (-1)^{|q|} \pard{H}{q}; \qquad Q= (-1)^{|P|}\pard{H}{P}
\end{equation}

\noindent where $P = (\tau^\dag, B^\dag,  A, \chi)$, $Q=( \tau, B, A^\dag, \chi^\dag)$, $p=(\xi^\dag, e^\dag, \omega, c)$ and $q=( \xi, e, \omega^\dag, c^\dag)$. 
\begin{lemma} \label{lem:sympl_BV}
The symplectomorphism $\psi^{\filt{0}} \colon \mathcal{F}_{GR}^{\filt{0}} \rightarrow \mathcal{F}_{\BFnd}^{\filt{0}}$ is  given by
 \begin{equation}
\psi^{\filt{0}} \colon \begin{cases}
 B= e - \iota_\xi \omega^\dag + \frac{1}{2}\iota^2_\xi c^\dag &\\
 B^\dag= e^\dag - \iota_\xi \tau^\dag &\\
 A = \omega - \iota_\xi e^\dag + \frac{1}{2}\iota^2_\xi \tau^\dag &\\
  A^\dag = \omega^\dag &\\
 \chi = -c +\frac{1}{2}\iota^2_\xi e^\dag-\frac{1}{6}\iota^3_\xi \tau^\dag  &\\
 \chi^\dag = - c^\dag &\\
 \tau = -\iota_\xi e + \frac{1}{2} \iota^2_\xi \omega^\dag- \frac{1}{3}\iota^3_\xi c^\dag &\\
 p_a^\dag \tau^\dag= \xi^{\dag\comp{a}}_a - [e^\dag \omega_a^\dag]^{\comp{a}}+ [e^\dag \iota_\xi c_a^\dag]^{\comp{a}} &
\end{cases},
\end{equation}
whereas its inverse is given by
\begin{equation}
(\psi^{\filt{0}})^{ -1}: \begin{cases}
e= B + \iota_\xi A^\dag + \frac{1}{2}\iota^2_\xi \chi^\dag &\\
\xi^{a}= -\tau^{\comp{a}} - \tau^{\comp{b}}\tau^{\comp{c}} A_{bc}^{\dag(a)}+\tau^{\comp{b}}\tau^{\comp{c}}\tau^{\comp{d}} A_{cd}^{\dag\comp{e}} A_{eb}^{\dag\comp{a}}+\frac12 \tau^{\comp{b}}\tau^{\comp{c}}\tau^{\comp{d}}\chi_{bcd }^{\dag\comp{a}} &\\
\omega^\dag= A^\dag &\\
c^\dag= -\chi^\dag &\\
\omega = A+ \iota_{\xi}B^\dag + \frac{1}{2} \iota^2_\xi \tau^\dag&\\
c=-\chi+ \frac{1}{2}\iota^2_{\xi}B^\dag+ \frac{1}{3} \iota^3_\xi \tau^\dag&\\
 \xi^\dag_a= B^\dag(A_a^\dag+ \iota_\xi \chi_a^\dag) + \tau^\dag(B_a - \frac{1}{2}\iota^2_\xi \chi_a^\dag)&\\
e^\dag=B^\dag + \iota_{\xi}\tau^\dag
\end{cases},
\end{equation}
where the indices $\comp{a}$	 denote components with respect to the basis $\{B_a\}$ as defined in \eqref{componenttopform}.
\end{lemma}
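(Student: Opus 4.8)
The strategy is to extract $\psi$ and $\psi^{-1}$ directly from the generating function $H$ of \eqref{Generatingfunction} by applying the graded generating-function rules \eqref{Hamiltonequations} — $H$ being of mixed type, a function of the old coordinates $q=(\xi,e,\omega^\dag,c^\dag)$ and the new momenta $P=(\tau^\dag,B^\dag,A,\chi)$ — and then inverting the resulting relations. Since a transformation produced this way is automatically a symplectomorphism, I would not separately verify $\psi^*\varpi_{\BFnd}=\varpi_{GR}$; the entire content of the statement is the explicit closed form of the two maps.

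The decisive structural observation is that $H$ is \emph{linear} in $P$. Hence the equations $Q=(-1)^{|P|}\pard{H}{P}$ for the new coordinates $Q=(\tau,B,A^\dag,\chi^\dag)$ carry no $P$-dependence and express $\tau,B,A^\dag,\chi^\dag$ through $q$ alone; these are four entries of $\psi$, and after the substitutions $\omega^\dag=A^\dag$, $c^\dag=-\chi^\dag$ they yield at once the corresponding entries of $\psi^{-1}$. The remaining half comes from $p=-(-1)^{|q|}\pard{H}{q}$, which gives the old momenta $p=(\xi^\dag,e^\dag,\omega,c)$ as functions of $(q,P)$ that are affine in $P$; solving this $P$-affine system for $P$ completes $\psi$, while for $\psi^{-1}$ one instead substitutes the inverted coordinate relations $q=q(Q)$ into the same equations to read off the GR momenta in terms of the BF fields. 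The only bookkeeping here is to fix, once and for all, the Koszul signs prescribed by the parities recorded after \eqref{Hamiltonequations}.

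The one genuinely nontrivial step, and the main obstacle, is the inversion of $\tau=-\iota_\xi e+\tfrac12\iota_\xi^2\omega^\dag-\tfrac13\iota_\xi^3 c^\dag$ needed for $\psi^{-1}$: one must solve it for the diffeomorphism ghost $\xi$ in terms of $\tau$ and the remaining fields. At leading order this reads $\iota_\xi e=-\tau$, solvable precisely because $e$ (equivalently $B$) is nondegenerate, so $\xi=-e^{-1}(\tau)+\cdots$, the corrections being generated by iteration. As $M$ is three-dimensional and the $\xi^a$ are odd, every such expansion terminates at order three ($\iota_\xi^4=0$), producing the degree-three polynomial $\xi^a=-\tau^{(a)}-\tau^{(b)}\tau^{(c)}A^{\dag(a)}_{bc}+\cdots$ recorded in the statement. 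The companion relation between $\xi^\dag$ and $\tau^\dag$ is the second delicate point: because these antifields carry different tensor types, the map between them is only visible component-wise in the triad basis $\{B_a\}$, which is why it appears in the projected form $p_a^\dag\tau^\dag=\xi^{\dag(a)}_a-\cdots$. Finally I would confirm $\psi\circ\psi^{-1}=\mathrm{id}$ by direct substitution, the nilpotency of the ghosts ensuring that all intermediate series are finite.
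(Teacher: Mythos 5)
Your proposal is correct and follows essentially the same route as the paper: both read off $\psi$ from the generating function \eqref{Generatingfunction} via the rules \eqref{Hamiltonequations}, isolate $\tau^\dag$ by substituting $B^\dag = e^\dag - \iota_\xi\tau^\dag$ into the $\xi^\dag$-equation and projecting onto the triad basis, and then invert the remaining relations, the only delicate one being the nilpotent equation expressing $\xi$ through $\tau$. The sole difference is that the paper also records a direct check that $\varpi_{GR}=\psi^*\varpi_{\BFnd}$ (citing \cite{CSS2017}), whereas you rely --- legitimately --- on the fact that a transformation defined by a generating function is automatically symplectic.
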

\begin{proof}

We first apply \eqref{Hamiltonequations} to the generating function \eqref{Generatingfunction} and get
\begin{subequations}
\begin{align}
	B&= e - \iota_\xi \omega^\dag + \frac{1}{2}\iota^2_\xi c^\dag\\
	 A^\dag &=\omega^\dag \\
	\tau &=  -\iota_\xi e + \frac{1}{2} \iota^2_\xi \omega^\dag- \frac{1}{3}\iota^3_\xi c^\dag  \\
	\chi^\dag &= - c^\dag \\
	\xi^\dag_a &= B^\dag \omega_a^\dag - B^\dag \iota_\xi c_a^\dag + \tau^\dag e_a - \tau^\dag \iota_\xi \omega^\dag_a+ \tau^\dag \iota^2_\xi c^\dag_a  \label{explicitGR-BF_bulk5}\\
	 e^\dag&= B^\dag + \iota_\xi \tau^\dag \label{explicitGR-BF_bulk6}\\
	\omega &=\iota_\xi B^\dag + \frac{1}{2}\iota^2_\xi \tau^\dag +A \\
 	c &= \frac{1}{2}\iota^2_\xi B^\dag+\frac{1}{3}\iota^3_\xi \tau^\dag -\chi.
\end{align}
\end{subequations}
Equation \eqref{explicitGR-BF_bulk6} yields $ B^\dag= e^\dag - \iota_\xi \tau^\dag$ that in turn, inserted into \eqref{explicitGR-BF_bulk5}, gives
$$ \tau^\dag e_a= \xi^\dag_a - e^\dag \omega_a^\dag+ e^\dag \iota_\xi c_a^\dag .$$
Since this is an equation between objects valued in $\wedge^3 V$, we can extract an $e_a$ factor and get
$$p_a^\dag \tau^\dag= \xi^{\dag\comp{a}}_a - [e^\dag \omega_a^\dag]^{\comp{a}}+ [e^\dag \iota_\xi c_a^\dag]^{\comp{a}}.$$
Having $\tau^\dag$, we can now invert all other equations. 
An easy but lengthy computation shows that this symplectomorphism correctly satisfies $ \varpi_{GR}^{\filt{0}}= \psi^{\filt{0}*} \varpi_{\BFnd}^{\filt{0}}$ and it preserves the action functionals, as was shown in \cite{CSS2017}.
\end{proof} 
\begin{remark}
In order to build the inverse symplectomorphism $(\psi^{\filt{0}})^{-1}$ we must require that the image of $B$ be a basis of ${V}$ at every point. Thus the two theories are strongly equivalent only if also $B$ satisfies the non-degeneracy condition in Definition \ref{def:classBF}. {An extension of this equivalence to the degenerate case is presented in \cite{Szabo}, within the context of $L_\infty$ algebras.}
\end{remark}

\subsection{Equivalence on boundaries, corners, vertices}\label{Sect:BCV}
Since the form $\varpi_{GR}^{\filt{1}}$ of Equation \eqref{BoundarytwoformGR} is not in Darboux form, it is not possible to find a generating function. We can nonetheless produce an explicit symplectomorphism and its inverse:
\begin{equation}
\psi^{\filt{1}}: \begin{cases}
 B= \te - \iota_{\txi{}} \tom^\dag  &\\
 B^\dag= \te^\dag &\\
 A = \omega - \iota_{\txi{}} \te^\dag &\\
  A^\dag = \tom^\dag &\\
 \chi = -\tc +\frac{1}{2}\iota^2_{\txi{}} \te^\dag  &\\
 \tau = -\iota_{\txi{}} \te - \epsilon_n \txi{n} + \frac{1}{2} \iota^2_{\txi{}} \tom^\dag &\\
\end{cases}
(\psi^{\filt{1}})^{ -1}: \begin{cases}
\te= B+ \iota_{\txi{}}A^\dag &\\
\txi{a}= -\tau^{\comp{a}} - \tau^{\comp{a}}\tau^{\comp{b}} A_{ab}^{\dag\comp{a}}&\\
\txi{n}= -\tau^{\comp{n}} - \tau^{\comp{a}}\tau^{\comp{b}} A_{ab}^{\dag\comp{n}}&\\
\tom= A+ \iota_{\txi{}}B^\dag &\\
\tc=-\chi+ \frac{1}{2}\iota_{\txi{}}\iota_{\txi{}}B^\dag &\\
\tom^\dag=A^\dag  &\\
\te^\dag=B
^\dag 
\end{cases}
\end{equation}
where superscripts $\comp{n}$ and $\comp{a}$ denote the components with respect to $\{\epsilon_n,B_a\}$. It is straightforward to check that $\psi^{\filt{1}}\circ(\psi^{\filt{1}})^{ -1}=id$, $(\psi^{\filt{1}})^{ -1}\circ \psi^{\filt{1}}= id$ and $ \varpi^{\filt{1}}_{GR}= \psi^{\filt{1} *} \varpi^{\filt{1}}_{\BFnd}$. Analogously, on the corner (the codimension-$2$ stratum) we have the explicit transformation
\begin{equation}
\psi^{\filt{2}}: \begin{cases}
 B= \tte  &\\
 A = \ttom  &\\
 \chi = -\ttc &\\
 \tau = -\iota_{\ttxi{}} \tte - \epsilon_m \ttxi{m}- \epsilon_n \ttxi{n}. &\\
\end{cases}
(\psi^{\filt{2}})^{ -1}: \begin{cases}
 \tte =B  &\\
  \ttom = A  &\\
 \ttc = -\chi&\\
 \ttxi{m}= \tau^{\comp{m}} & \\
 \ttxi{n}= \tau^{\comp{n}} & \\ 
 \ttxi{a}= \tau^{\comp{a}},
\end{cases}
\end{equation}
while, on the vertex, we have
\begin{equation}
\psi^{\filt{3}}: \begin{cases}
 \chi = -\widetilde{\ttc} &\\
 \tau =- \epsilon_a \widetilde{\ttxi{a}} - \epsilon_m \widetilde{\ttxi{m}}- \epsilon_n \widetilde{\ttxi{n}} &\\
\end{cases}
\end{equation}
with inverses given by $\ttxi{m}= \tau^{\comp{m}}$, $\ttxi{n}= \tau^{\comp{n}}$ and $\ttxi{a}= \tau^{\comp{a}}$, i.e. the components of $\tau$ with respect to $\epsilon_m$, $\epsilon_n$ and $\epsilon_a$ respectively. Finally, it is straightforward to check that $(\psi^{\filt{k}})^* S_{\BFnd}^{\filt{k}} = S_{GR}^{\filt{k}}$ for $k=1,2,3$.

\subsection{Commutativity}
In this section we prove the commutativity of the three square subdiagrams of the diagram \eqref{Commdiag}. This is sufficent to prove  commutativity as a  whole. For the sake of clarity, we denote the BF variables on the $1$-stratum (and subsequent $2$- and $3$-strata) with a tilde, analogously to the GR notation. 
To avoid confusion, we explicitly denote the restriction to the $1$-stratum (resp. $2$- and $3$-stratum) with an apex ', e.g. $e'\equiv e\vert_{\filtint{1}}$ and $\te{}' \equiv \te\vert_{\filtint{2}}$.
\begin{proposition}
Diagram \eqref{Commdiag} is commutative.
\end{proposition}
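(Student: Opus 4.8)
The plan is to reduce the claim to the commutativity of the three elementary square subdiagrams of \eqref{Commdiag}, and to verify each of them generator by generator. Since all four maps in a given square are defined componentwise on the generators of the spaces of fields, it suffices to evaluate the two legs
\[
\pi^{\filt{k+1}}_{\BFnd}\circ\psi^{\filt{k}}\qquad\text{and}\qquad \psi^{\filt{k+1}}\circ\pi^{\filt{k+1}}_{GR},\qquad k=0,1,2,
\]
(with the convention $\psi^{\filt{0}}\equiv\psi$) on each field and compare. The decisive structural input is Theorem \ref{thm:BF-BV_BFV}: because $\BFnd$ theory is self-similar, the maps $\pi^{\filt{i}}_{\BFnd}$ are nothing but the restrictions of the superfields $\mathcal{A},\mathcal{B}$ to the stratum $\filtint{i}$ (only the admissible form-degrees survive). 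Thus the lower leg of each square is simply ``apply the already-known bulk/stratum symplectomorphism and then drop the components whose form degree exceeds $\dim \filtint{i}$'', and this must be matched against ``symplectically reduce to $\filtint{i}$ and then apply the stratum symplectomorphism''.

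First I would settle the leftmost square ($k=0$). Fixing a bulk field, I would split every interior product appearing in $\psi$ into its tangential and normal parts, $\iota_\xi=\iota_{\bar\xi}+\xi^n\iota_n$, where $\bar\xi=\xi^a\partial_a$ is tangent to $\filtint{1}$, and then restrict. On the even generators the matching is quick: for instance $B|_{\filtint{1}}=\bigl(e-\iota_\xi\omega^\dag+\tfrac12\iota_\xi^2 c^\dag\bigr)|_{\filtint{1}}$ must equal $\te-\iota_{\txi{}}\tom^\dag$ read off from \eqref{Projectionbulktoboundary} and $\psi^{\filt{1}}$; the normal pieces $\xi^n\iota_n\omega^\dag$ and $\xi^n\iota_n c^\dag$ liberated by the restriction are exactly the correction terms $-\omega^\dag_n\xi^n+\iota_\xi c^\dag_n\xi^n$ carried by $\te$ in the symplectic reduction, so the two expressions collapse onto one another after using the nilpotency $(\xi^n)^2=0$ and the identity $\epsilon_n\txi{n}=e_n\xi^n$ (valid since $\epsilon_n=(1+\varepsilon)e_n$ and $\txi{n}=(1+\varepsilon)^{-1}\xi^n$). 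The same substitution disposes of $A,\chi$ and $\tau$.

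The real work, and the step I expect to be the main obstacle, is the antifield sector of this first square. The reduced field $\te^\dag$ of \eqref{Projectionbulktoboundary} and the component $p_a^\dag\tau^\dag$ produced by $\psi$ are both long combinations of the projected pieces $p_a^\dag,p_n^\dag$ of $e^\dag,\omega^\dag,c^\dag,\xi^\dag$, and one has to check that restricting $\tau^\dag$ (together with $B^\dag=e^\dag-\iota_\xi\tau^\dag$) reproduces precisely the combination that $\psi^{\filt{1}}$ builds from the reduced data $\te^\dag,\tom^\dag$. Here I would expand everything in the transformed frame $\{\epsilon_n,\te_a\}$, keep careful track of the Koszul signs for the odd generators $\xi^\mu,c^\dag,\omega^\dag,e^\dag$, and use $(\xi^n)^2=0$ together with the vanishing pullback of top-degree normal forms to annihilate the numerous cross terms. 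This is a lengthy, sign-sensitive bookkeeping rather than a conceptual difficulty; I would organise it by treating the $p_a^\dag$- and $p_n^\dag$-projected components separately, exactly along the lines of the kernel computation in Proposition \ref{prop:BFVdata}, where the same relations were already solved.

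Finally, the middle and right squares ($k=1,2$) follow by iterating the procedure with the data of Propositions \ref{prop:BFFVdata} and \ref{prop:BFFFVdata} and the symplectomorphisms $\psi^{\filt{2}},\psi^{\filt{3}}$. These are strictly easier: at the corner no antifields survive among the generators $\tte,\ttom,\ttc,\ttxi{a},\ttxi{n},\ttxi{m}$, and at the vertex only $\widetilde{\ttc}$ and the three ghosts $\widetilde{\ttxi{\mu}}$ remain, so the verification reduces to matching the restriction of $\tau$ against the combinations $-\iota_{\ttxi{}}\tte-\epsilon_m\ttxi{m}-\epsilon_n\ttxi{n}$ and $-\epsilon_a\widetilde{\ttxi{a}}-\epsilon_m\widetilde{\ttxi{m}}-\epsilon_n\widetilde{\ttxi{n}}$, each of which is immediate once the fixed completions $\epsilon_m,\epsilon_n,\epsilon_a$ and the rescalings of $\ttxi{m},\widetilde{\ttxi{a}}$ are inserted. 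Commutativity of the three squares then yields commutativity of the whole diagram \eqref{Commdiag}.
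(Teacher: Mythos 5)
Your proposal follows essentially the same route as the paper's proof: splitting \eqref{Commdiag} into its three square subdiagrams, comparing the compositions $\pi^{\filt{k+1}}_{\BFnd}\circ\psi^{\filt{k}}$ and $\psi^{\filt{k+1}}\circ\pi^{\filt{k+1}}_{GR}$ generator by generator, exploiting that the $\BFnd$ projections are plain restrictions of the superfields, and closing the computation with the nilpotency $(\xi^n)^2=0$ and the identities $\epsilon_n\txi{n}=e_n\xi^n$ (and their corner/vertex analogues). You also correctly single out the antifield comparison $\te^\dag$ versus $B^\dag=e^\dag-\iota_\xi\tau^\dag$ in the first square as the only laborious step, which is exactly where the paper's explicit bookkeeping is concentrated.
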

\begin{proof}
The first square is
\begin{center}
\begin{tikzcd}[ row sep= 3 em, column sep= 4 em]
\mathcal{F}_{GR}^{\filt{0}} \arrow[r, "\pi^{\filt{0}}_{GR}"]\arrow[d, "\psi^{\filt{0}}"]& \mathcal{F}^{\filt{1}}_{GR}  \arrow[d, "\psi^{\filt{1}}"] \\
\mathcal{F}_{\BFnd}^{\filt{0}} \arrow[r, "\pi^{\filt{0}}_{\BFnd}"]& 
\mathcal{F}^{\filt{1}}_{\BFnd}
\end{tikzcd}
\end{center}
 The left-bottom composition $\pi^{\filt{0}}_{\BFnd}\circ \psi^{\filt{0}}$ reads
\begin{align*}
 \widetilde{B}&=( e - \iota_\xi \omega^\dag + \frac{1}{2}\iota^2_\xi c^\dag )'=e' - \iota_{\xi '} \omega^{\dag '} - \omega_n^{\dag '}\xi^{n'} +  \iota_{\xi '}c_n^{\dag '}\xi^{n'}  \\
\widetilde{B}^\dag&= (e^\dag - \iota_\xi \tau^\dag)'= e^{\dag '}- \tau_n^{\dag '}\xi^{n'}\\
\widetilde{A} &= (\omega - \iota_\xi e^\dag + \frac{1}{2}\iota^2_\xi \tau^\dag)'=\omega' - \iota_{\xi '} e^{\dag '}-e_n^{\dag '}\xi^{n'}  + \iota_{\xi '} \tau_n^{\dag '} \xi^{n'}=\omega' - \iota_{\xi '}\widetilde{B}^\dag -e_n^{\dag '}\xi^{n'} \\
\widetilde{A}^\dag &= \omega^{\dag '} \\
\widetilde{\chi} &= (-c +\frac{1}{2}\iota^2_\xi e^\dag-\frac{1}{6}\iota^3_\xi \tau^\dag)'= -c' +\frac{1}{2}\iota^2_{\xi '}  e^{\dag '}  + \iota_{\xi '} e_n^{\dag '} \xi^{n'}-   \frac{1}{2}\iota^2_{\xi '} \tau_n^{\dag '} \xi^{n'} \\
	&= -c' +\frac{1}{2}\iota^2_{\xi '}\widetilde{B}^\dag  + \iota_{\xi '} e_n^{\dag '} \xi^{n'}\\
\widetilde{\tau} &=( -\iota_\xi e + \frac{1}{2} \iota^2_\xi \omega^\dag- \frac{1}{3}\iota^3_\xi c^\dag)' = -\iota_{\xi '} e' -e'_n\xi^{n'}+ \frac{1}{2} \iota^2_{\xi '} \omega^{\dag '}+ \iota_{\xi '}\omega_n^{\dag '}\xi^{n'}-\frac{1}{2} \iota^2_{\xi '}c_n^{\dag '}\xi^{n'}
\end{align*}
where
\begin{align*}
\widetilde{\tau}_n^\dag=& (\chi_a\mathsf{V}_n^{\comp{a}} - [e_n^\dag \omega_a^\dag- e^\dag \omega_{an}^\dag]^{\comp{a}}+ [e^\dag \iota_\xi c_{an}^\dag]^{\comp{a}}+\chi_n\mathsf{V}_n^{\comp{n}} - [e_n^\dag \omega_n^\dag]^{\comp{n}}+ [e_n^\dag \iota_\xi c_n^\dag]^{\comp{n}})'.
\end{align*}
The top-right composition $\psi^{\filt{1}}\circ \pi^{\filt{0}}_{GR}$ reads:
\begin{align*}
 \widetilde{B}&= \te - \iota_{\txi{}} \tom^\dag  =e' - \omega^{\dag '}_n\xi^{n'} +\iota_{\xi '} c^{\dag '}_n\xi^{n '} - \iota_{\xi '}\omega^{\dag '} \\
 \widetilde{B}^\dag&= \te^\dag =e^{\dag} - \chi_a\xi^n\mathsf{V}^{\comp{a}}  +\left(\omega_{na}^\dag\xi^n e^\dag \right)^{\comp{a}}- (e^\dag_n\xi^n\omega^\dag_a)^{\comp{a}} - \left(\iota_\xi c^\dag_{na}\xi^n e^\dag\right)^{\comp{a}} \\
& - \chi_n\xi^n\mathsf{V}^{\comp{n}}  + \left(\omega_n^\dag\xi^n e^\dag_n \right)^{\comp{n}}- (e^\dag_n \iota_\xi c^\dag_n\xi^n)^{\comp{n}} \\
 \widetilde{A} &= \tom - \iota_{\txi{}} \te^\dag= \omega - e^\dag_n\xi^n -\iota_{\xi} \te^\dag \\
 \widetilde{A}^\dag &= \tom^\dag = \omega^{\dag '} \\
 \widetilde{\chi} & = -\tc +\frac{1}{2}\iota^2_{\txi{}} \te^\dag=  c -  \iota_\xi e^\dag_n\xi^n +\frac{1}{2}\iota^2_{\xi} \te^\dag \\
 \widetilde{\tau} &= -\iota_{\txi{}} \te - \epsilon_n \txi{n} + \frac{1}{2} \iota^2_{\txi{}} \tom^\dag = - \iota_{\xi '} e' + \iota_{\xi '}\omega^{\dag '}_n\xi^{n'} -\iota^2_{\xi '} c^{\dag '}_n\xi^{n'} -e'_n \xi^{n'} + \frac{1}{2} \iota^2_{\xi '} \omega^{\dag '},
\end{align*}
where we used that $e'_n \xi^{n'} = \epsilon_n \txi{n} $.
The rows of $\widetilde{B}$, $\widetilde{A}^\dag$ and $ \widetilde{\tau}$ coincide in both cases. The expressions of $ \widetilde{B}^\dag$ coincide as well, hence also the rows of $ \widetilde{A}$ and $ \widetilde{\chi}$ give the same result.
The second square is:
\begin{center}
\begin{tikzcd}[ row sep= 3 em, column sep= 4 em]
\mathcal{F}^{\filt{1}}_{GR} \arrow[r, "\pi^{\filt{1}}_{GR}"]\arrow[d, "\psi^{\filt{1}}"]& \mathcal{F}^{\filt{2}}_{GR}\arrow[d, "\psi^{\filt{2}}"]  \\
 \mathcal{F}^{\filt{1}}_{\BFnd} \arrow[r, "\pi^{\filt{1}}_{\BFnd}"]& \mathcal{F}^{\filt{2}}_{\BFnd}\\
\end{tikzcd}
\end{center}
 The left-bottom composition $\pi^{\filt{1}}_{\BFnd} \circ \psi^{\filt{1}}$ is
\begin{align*}
 \widetilde{\widetilde{B}}&= (\te - \iota_{\txi{}} \tom^\dag)'  =\te ' -  \tom_m^{\dag '}\txi{m'}  \\
\widetilde{\widetilde{ A}} &= (\tom - \iota_{\txi{}} \te^\dag)' = \tom '  -  \te_m^{\dag '}\txi{m'}\\
\widetilde{\widetilde{ \chi}}& = (-\tc +\frac{1}{2}\iota^2_{\txi{}} \te^\dag)' =  -\tc '+  \iota_{\txi{'}}\te_m^{\dag '}\txi{m'} \\
\widetilde{\widetilde{ \tau}} &= (-\iota_{\txi{}} \te - \epsilon_n \txi{n} + \frac{1}{2} \iota^2_{\txi{}} \tom^\dag)'=-\iota_{\txi{'}} \te ' - \epsilon_n \txi{n '}- \te '_m \txi{m '} +  \iota_{\txi{'}}  \tom_m^{\dag '}\txi{m'}  \\
\end{align*}
 while for the top-right composition $\psi^{\filt{2}}\circ \pi^{\filt{1}}_{GR}$ we have
\begin{align*}
\widetilde{\widetilde{B}}&= \tte = \te '-\tom^{\dag '}_m \txi{m'} \\
\widetilde{\widetilde{ A}} &= \ttom = \tom '-\te^{\dag '}_m \txi{m'}\\
\widetilde{\widetilde{ \chi}} &= -\ttc = -\tc '+ \iota_{\txi{'}} \te^{\dag '}_m \txi{m'}\\
\widetilde{\widetilde{ \tau}} &= -\iota_{\ttxi{}} \tte - \epsilon_m \ttxi{m}- \epsilon_n \ttxi{n}=-\iota_{\txi{'}} (\te '-\tom^{\dag '}_m \txi{m'}) - \te_m ' \txi{m'}- \epsilon_n \txi{n'}, 
\end{align*}
(again we used $\te '_m \txi{m'} = \epsilon_m \ttxi{m}$) and the expressions are identical.
The last square subdiagram is
\begin{center}
\begin{tikzcd}[ row sep= 3 em, column sep= 4 em]
\mathcal{F}^{\filt{2}}_{GR}\arrow[r, "\pi^{\filt{2}}_{GR}"]\arrow[d, "\psi^{\filt{2}}"] & \mathcal{F}^{\filt{3}}_{GR}\arrow[d, "\psi^{\filt{3}}"] \\
\mathcal{F}^{\filt{2}}_{\BFnd}\arrow[r, "\pi^{\filt{2}}_{\BFnd}"] & \mathcal{F}^{\filt{3}}_{\BFnd} \\
\end{tikzcd}
\end{center}
The two compositions are $\pi^{\filt{2}}_{\BFnd}\circ \psi^{\filt{2}}$:
\begin{align*}
\widetilde{\widetilde{\widetilde{ \chi}}} &= -\ttc' \\
\widetilde{\widetilde{\widetilde{ \tau}}} &= (-\iota_{\ttxi{}} \tte - \epsilon_m \ttxi{m}- \epsilon_n \ttxi{n})'=-\ \tte'_a \ttxi{a '}- \epsilon_m \ttxi{m'}- \epsilon_n \ttxi{n'}
\end{align*}
and $\psi^{\filt{3}}\circ \pi^{\filt{2}}_{GR}$:
\begin{align*}
\widetilde{\widetilde{\widetilde{ \chi}}} &= -\widetilde{\ttc} = - \ttc' \\
\widetilde{\widetilde{\widetilde{ \tau}}} &=- \epsilon_a \widetilde{\ttxi{a}} - \epsilon_m \widetilde{\ttxi{m}}- \epsilon_n \widetilde{\ttxi{n}}= - \tte_a '\ttxi{a '}- \epsilon_m \ttxi{m'}- \epsilon_n \ttxi{n'}
\end{align*}
using once again $\tte '_a \txi{a'} = \epsilon_a \ttxi{a}$.
\end{proof}

\section{Cosmological constant}
In this section we consider BF theory and GR theory with the addition of what is generally known as the cosmological term: a cubic term in $B$ (respectively $e$). Classically this amounts to considering the functionals
\begin{equation*}
S^{cl}_{\Lambda BF} = \trintl{M} B\wedge F_A + \frac{1}{6}\Lambda B \wedge B \wedge B, \qquad S^{cl}_{\Lambda GR} = \trintl{M} e\wedge F_\omega  +\frac{1}{6}\Lambda e \wedge e \wedge e
\end{equation*}
where $\Lambda \in \mathbb{R}$ is a constant. The corresponding BV expression are (\cite{CSS2017})
\begin{equation*}
S_{\Lambda BF}=S_{BF}+\frac{1}{6} \trintl{M} \Lambda \mathcal{B} \wedge \mathcal{B} \wedge \mathcal{B}, \qquad S_{GR}=S_{\Lambda GR}+ \frac{1}{6}\trintl{M} \Lambda e \wedge e \wedge e
\end{equation*}
BF theory with this additional term is still fully extendable and self-similar, so, using the notation of Theorem \ref{thm:BF-BV_BFV} we get

\begin{equation*}
S^{\filtBF{k}}_{\Lambda BF}=S_{BF}+\frac{1}{6} \trintl{ \filtintBF{k}} \Lambda \mathcal{B} \wedge \mathcal{B} \wedge \mathcal{B}.
\end{equation*}

Since the additional cosmological term does not contain any derivative, also GR is fully extendable and the reductions are not modified. The actions in higher codimensions are
\begin{align*}
S^{\filt{1}}_{\Lambda GR} &= S^{\filt{1}}_{GR} - \frac{1}{2} \trintl{\filtint{1}} \Lambda \epsilon_n \txi{n} \te \te    \\
S^{\filt{2}}_{\Lambda GR} &= S^{\filt{2}}_{GR} + \trintl{\filtint{2}} \Lambda \epsilon_n \ttxi{n} \epsilon_m \ttxi{m} \tte    \\
S^{\filt{3}}_{\Lambda GR} &= S^{\filt{3}}_{GR} - \trintl{\filtint{3}} \Lambda \epsilon_n \widetilde{\ttxi{n}} \epsilon_m \widetilde{\ttxi{m}} \epsilon_a \widetilde{\ttxi{a}}    .
\end{align*}
The two fully extended theories are still strongly equivalent and the map realizing the equivalence (namely the ones appearing in the diagram \eqref{Commdiag}) remain unchanged. We have just to check that the actions are still preserved by the corresponding symplectomorphisms. In the bulk this has been proved in \cite[section 2.3]{CSS2017}. The same argument can be adapted to higher codimension actions. The equations
$(\psi^{\filt{k}})^* S_{\Lambda \BFnd}^{\filt{k}} =  S_{\Lambda GR}^{\filt{k}}$ for $k=0, \dots 3$ can also be verified by direct computation.

\printbibliography
\end{document}